\newtheorem{theorem}{Theorem}
\newtheorem{lemma}{Lemma}
\newtheorem{definition}{Definition}
\newtheoremstyle{example_style}  
  {3pt}  
  {3pt}    
  {\normalfont}  
  {}       
  {\bfseries} 
  {.}      
  { }      
  {\thmname{#1}\thmnumber{ #2}\thmnote{ (#3)}}  
\theoremstyle{example_style}
\newtheorem{Example}{Example} 
\begin{document}

\title{Understanding the Fundamental Trade-Off Between Age of Information and Throughput in Unreliable Wireless Networks}
\author{Lin Wang}
\email{linwang0420@tamu.edu}
\affiliation{%
  \institution{Department of ECE, Texas A\& M University}
  \city{College Station}
  \state{Texas}
  \country{USA}
}
\author{I-Hong Hou}
\email{ihou@tamu.edu}
\affiliation{%
  \institution{Department of ECE, Texas A\& M University}
  \city{College Station}
  \state{Texas}
  \country{USA}
}

\begin{abstract}
This paper characterizes the fundamental trade-off between throughput and Age of Information (AoI) in wireless networks where multiple devices transmit status updates to a central base station over unreliable channels. To address the complexity introduced by stochastic transmission successes, we propose the throughput-AoI capacity region, which defines all feasible throughput-AoI pairs achievable under any scheduling policy. Using a second-order approximation that incorporates both mean and temporal variance, we derive an outer bound and a tight inner bound for the throughput-AoI capacity region. Furthermore, we propose a simple and low complexity scheduling policy and prove that it achieves every interior point within the tight inner bound. This establishes a systematic and theoretically grounded framework for the joint optimization of throughput and information freshness in practical wireless communication scenarios.

To validate our theoretical framework and demonstrate the utility of the throughput-AoI capacity region, extensive simulations are implemented. Simulation results demonstrate that our proposed policy significantly outperforms conventional methods across various practical network optimization scenarios. The findings highlight our approach's effectiveness in optimizing both throughput and AoI, underscoring its applicability and robustness in practical wireless networks.

\end{abstract}

\begin{CCSXML}
<ccs2012>
<concept>
<concept_id>10003033.10003068.10003069.10003072</concept_id>
<concept_desc>Networks~Packet scheduling</concept_desc>
<concept_significance>500</concept_significance>
</concept>
<concept>
<concept_id>10003033.10003079.10003080</concept_id>
<concept_desc>Networks~Network performance modeling</concept_desc>
<concept_significance>500</concept_significance>
</concept>
<concept>
<concept_id>10003033.10003079.10011672</concept_id>
<concept_desc>Networks~Network performance analysis</concept_desc>
<concept_significance>500</concept_significance>
</concept>
</ccs2012>
\end{CCSXML}

\ccsdesc[500]{Networks~Packet scheduling}
\ccsdesc[500]{Networks~Network performance modeling}
\ccsdesc[500]{Networks~Network performance analysis}
\keywords{Wireless networks, Age of Information, Throughput, Scheduling}
\maketitle

\section{Introduction} \label{section:intro}

The increasing demand for real-time information updates in wireless networks has led to the emergence of Age of Information (AoI) as a critical performance metric\cite{firstaoi, aoisurvey}. Unlike conventional latency measures, AoI captures data freshness—a critical aspect for Internet of Things (IoT)\cite{iotaoi,underwateraoi,energysys1,energysys2}, cyber-physical systems \cite{cpsaoi,cpsaoi2}, and autonomous control networks \cite{uavaoi,uavaoi2,uavaoi3}—since outdated information can significantly impair functionality, particularly in time-sensitive applications such as emergency monitoring systems. Indeed, existing work\cite{ramakanth2024monitoring,ornee2019sampling} has shown that the estimation error of some random processes depends only on the AoI of the flow. However, in many other scenarios, the estimation accuracy of a remote sensing application depends on both the quantity and the freshness of available data\cite{guo2021scheduling}. In such scenarios, optimizing the estimation accuracy requires the joint optimization of throughput and AoI, which remains an open problem. The objective of this paper is to characterize the trade-off between throughput and AoI among heterogeneous devices in unreliable wireless networks precisely, which can then be used to address the joint optimization of throughput and AoI. 

To bridge the gap in understanding the relationship between AoI and throughput, we first propose an analytical system model that captures both throughput and AoI for each device. We formally define the throughput-AoI capacity region, which represents the set of all feasible throughput-AoI pairs achievable by any scheduling policy under given network constraints. Furthermore, we demonstrate that the throughput-AoI capacity region can be employed to address several practical network optimization problems, many of which remain unsolved.

To characterize the throughput-AoI capacity region, we adopt a second-order analysis approach that represents each random process by its mean and temporal variance. This approach provides a more accurate approximation of AoI dynamics than conventional first-order techniques. Moreover, leveraging the Markov Central Limit Theorem reduces the complexity associated with stochastic fluctuations in the network. To further streamline our analysis, we introduce the concept of system-wide temporal variance to simplify the evaluation of variance across individual devices.

Using the second-order analysis, we establish an outer bound and an inner bound for the throughput-AoI capacity region, with the two bounds being nearly identical except at the boundaries. Furthermore, we propose a Variance-Weighted Deficit (VWD) scheduling policy that is capable of achieving every point within the inner bound. The simplicity and low complexity of the VWD policy make it highly practical for implementation, providing an effective and efficient framework to balance throughput and information freshness in wireless networks.

To demonstrate the practical applicability of our theoretical results, we apply our framework to four representative problems: AoI minimization with hard throughput constraints, cost minimization with soft throughput constraints, proportional fairness in both throughput and AoI, and admission control for devices with AoI constraints. Extensive simulations confirm that our proposed scheduling policy significantly outperforms existing approaches, thereby addressing the critical gap in understanding the interplay between AoI and throughput in unreliable wireless networks.

Our contributions can be summarized as follows:
\begin{itemize}
\item We formalize the fundamental trade-off between throughput and AoI in unreliable wireless networks and define the corresponding throughput-AoI capacity region.
\item We derive nearly identical outer and inner bounds for the throughput-AoI capacity region and propose a low-complexity scheduling policy capable of achieving every interior point within this region.
\item We conduct extensive simulations across multiple optimization scenarios to validate the effectiveness of our proposed throughput-AoI capacity framework, demonstrating that our policy significantly outperforms existing optimal scheduling policies in balancing AoI and throughput.
\end{itemize}

The remainder of this paper is organized as follows: Section~\ref{section:relatedwork} reviews recent studies on wireless scheduling. Section~\ref{section:model} formally defines the system model and problem formulation. Section~\ref{section:secorderappro} introduces the second-order AoI approximation and discusses the system-wide variance. Section~\ref{section:outerbound} derives an outer bound for the throughput-AoI capacity region, establishing performance limits for all feasible scheduling policies. Section~\ref{section:innerbound} further defines a tight inner bound for the capacity region, proposes a simple scheduling policy, and analytically demonstrates its ability to achieve every interior point within the throughput-AoI capacity region. Section~\ref{section:simulation} presents comprehensive simulation results, verifying our theoretical findings and highlighting the advantages of our policy compared to existing methods. Section~\ref{section:conclusion} concludes the paper by summarizing key insights and outlining potential directions for future research.

\section{RELATED WORK}\label{section:relatedwork}
A significant amount of research has focused on designing scheduling policies in wireless networks to minimize the Age of Information~\cite{freshcsma,csma2,whittleminaoi,aoithreshold}. Kadota et al.\cite{broadcast}proposed a scheduling policy to minimize Age of Information in wireless broadcast networks with unreliable channels, proving the Greedy Policy's optimality in symmetric cases and deriving a closed-form Whittle Index for general settings. Kadota and Modiano~\cite{stocasticarrival} studied a system where a base station serves multiple traffic streams destined for different receivers, with packet arrivals following a stochastic process and being enqueued separately. They derived a lower bound on the achievable AoI under any queueing discipline and further developed an Optimal Stationary Randomized policy and a Max-Weight policy for three common queueing disciplines. Tripathi and Modiano~\cite{tripathi2022optimizing} developed a simple model for the timely monitoring of correlated sources over a wireless network. Furthermore, Pan et al.~\cite{twochannelminaoi} studied the problem of minimizing AoI when a source transmits status updates over two heterogeneous communication channels. Partial-Index approaches are also used in AoI minimization problems~\cite{partialindex,partialindex2} for multichannel scenarios. These studies share a common objective: to optimize the allocation of channel resources among networked sensors for enhanced AoI performance. Li et al.~\cite{aoiviolation}examined an AoI scheduling problem under a violation tolerance constraint that permits occasional AoI violation and defined the corresponding capacity region in terms of AoI deadlines, tolerance rates, and packet loss rates. Zhou and Lin\cite{whittleminaoi} established the Whittle indexability of the AoI minimization problem with stochastic arrivals and unreliable channels by introducing a novel, easily verifiable Active Time condition and a coupling-based proof approach. However, these works neglect throughput—a critical metric in communication networks—and thus overlook the trade-off between minimizing AoI and maintaining high throughput.

A key insight is that high throughput does not necessarily result in low AoI\cite{aoitprela}. To address this challenge, Kadota and Modiano~\cite{modiano}studied a single-hop wireless network where multiple nodes transmit time-sensitive information to a base station. They proposed three low-complexity scheduling policies—randomized, Max-Weight, and Whittle’s Index—to minimize the expected weighted sum AoI while ensuring timely throughput constraints. Fountoulakis et al.~\cite{aoiandtp} investigated a mixed-traffic wireless system consisting of both deadline-constrained and AoI-sensitive users and proposed an efficient scheduling policy that minimizes the average Age of Information while simultaneously satisfying timely throughput constraints. However, their study assumes that all devices have strict throughput requirements, which may not always be the case in practical scenarios. To the best of our knowledge, no existing research has systematically quantified the trade-off between throughput and AoI. 

Guo et al. ~\cite{daojingsecondorder, guo2024aoi} proposed a novel framework for optimizing the second-order behavior of wireless networks by capturing both the mean and temporal variance of random processes, unlike traditional approaches that focus solely on first-order statistics. Similarly, Fan et al.~\cite{siqi} characterized both active and passive users in a random access network by using its mean and temporal variance to minimize AoI. However, their applicability is limited to ON-OFF channel models or random access models for AoI minimization scenarios, restricting its generalizability to broader network settings.

\section{System Model}\label{section:model}
We consider a single-hop wireless network scenario, such as a smart home system, as shown in Fig.~\ref{fig:smarthome}.

\begin{figure}[h]
    \vspace{-10pt}
    \begin{center}
        \includegraphics[width=0.95\linewidth]{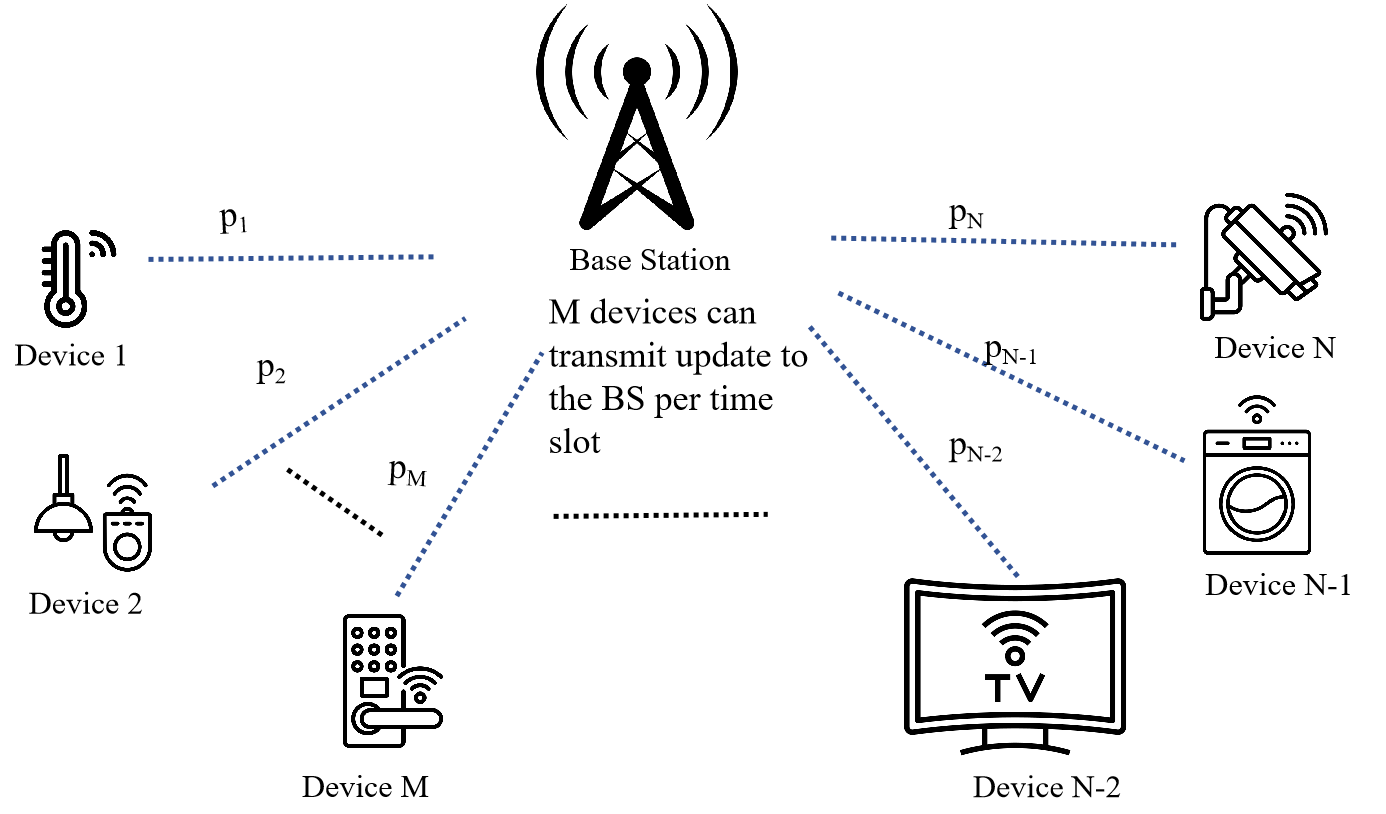}
    \end{center}
    \caption{Smart home system}
    \label{fig:smarthome}
\end{figure}
In this system, multiple smart devices (e.g., smart bulbs, thermostats, security cameras) need to transmit status update information to a central control unit, i.e., the base station (BS), for processing. Time is divided into discrete slots, indexed by $t \in \{1, 2, \dots, T\}$. The wireless channel allows up to $M$ devices to transmit simultaneously in each time slot. At each time slot $t$, the BS selects $M$ devices from the set $\{1, 2, \dots, N\}$ for transmission. Each selected device generates a new status update and transmits it to the BS. A packet transmitted by device $i$ is successfully received by the BS with probability $p_i \in (0,1]$, while the probability of transmission failure is $1 - p_i$. Although $p_i$ remains constant over time, it may vary across different devices. We define \( Z_i(t) \) as an indicator function, where \( Z_i(t) = 1 \) denotes a successful transmission from device \( i \) to the BS at time \( t \), and \( Z_i(t) = 0 \) otherwise.
 
The performance of each device $i$ depends on both the quantity and timeliness of its status updates. Therefore, we evaluate the performance of a device based on its throughput and Age of Information. The throughput of a device is defined as the long-term average number of successful packet deliveries and is given by $ \liminf\limits_{T \to \infty} \frac{\sum_{t=1}^{T} Z_i(t)}{T} $.
In our model, the AoI of a device $i$ at time $t$, denoted by $a_i(t)$, represents the time elapsed since the last successful packet delivery. It evolves according to the following recursive expression:
\begin{equation}
    \label{eq:aoi_update}
    a_i(t+1) =
    \begin{cases} 
        1, & \text{if } Z_i(t) = 1, \\
        a_i(t) + 1, & \text{otherwise}.
    \end{cases}
\end{equation}
The long-term average AoI of device \( i \) is then defined as\\ 
 $\limsup\limits_{T \to \infty} \frac{\sum_{t=1}^{T} a_i(t)}{T}.$

The objective of this paper is to enable a fine-grained trade-off of throughput and AoI among devices with heterogeneous channel conditions and distinct performance preferences. To achieve this, we aim to characterize the throughput-AoI capacity region of wireless networks, which is defined as follows:
\begin{definition}\label{def:capacity region}
The throughput-AoI capacity region of the network is the set of all feasible vectors of throughput-AoI pairs \( \{(m_i, h_i) \mid 1 \leq i \leq N\} \) such that there exists a scheduling policy that ensures:
\[
\liminf_{T \to \infty} \frac{ \sum_{t=1}^T Z_i(t)}{T}\geq m_i ,  
\]
\[
\limsup\limits_{T \to \infty}\frac{\sum_{t=1}^{T} a_i(t)}{T} \leq h_i .
\]
\end{definition}
To illustrate the utility of the throughput-AoI capacity region, we present the following four example applications. To the best of our knowledge, many of these problems remain open.
\begin{Example}\label{example1}[AoI minimization with hard throughput constraints]\\
This problem, introduced by Kadota et al.~\cite{modiano}, considers that each device \( i \) requires a minimum throughput of \( q_i \). The objective is to minimize the total AoI in the system while meeting the minimum throughput requirements. Using our model, the problem can be formulated as the following optimization problem:
\begin{align}
 \min&\sum_{i=1}^N h_i,\\ 
\mbox{s.t. } & m_i \geq q_i, \forall i,\\
&\{(m_i, h_i)|1\leq i\leq N\}\mbox{ is feasible.}
\end{align}
\end{Example}
\begin{Example}\label{example2}[Cost minimization with soft throughput constraints.]\\
Solving the problem in Kadota et al.~\cite{modiano} can lead to undesirable performance when the throughput constraints \(q_i\) are either too strict or infeasible. We consider an extension of the problem studied by Kadota et al.~\cite{modiano}, 
where throughput constraints can be violated, but each violation incurs a penalty 
\( C(v_i) \), with \( v_i \) denoting the amount of violation for device \( i \). 
Our objective is to minimize the sum of the total penalty and the total AoI across 
the entire network. Using our model, this problem can be formulated as the following 
optimization problem:

\begin{align}
\min &\sum_{i=1}^N C(q_i-m_i) + h_i\\
\mbox{s.t. } &\{(m_i, h_i)|1\leq i\leq N\}\mbox{ is feasible.}
\end{align}
\end{Example}
\begin{Example}\label{example3}[Proportional fairness in both throughput and AoI]\\
Proportional fairness is a popular objective in network optimization due to its ability to achieve both high spectrum efficiency and fairness among heterogeneous users. Almost all existing work on proportional fairness in wireless scheduling only focuses on proportional fairness in terms of throughput. In this setting, the objective is to maximize $\sum_i \log m_i$. By characterizing the throughput-AoI capacity region, we can extend proportional fairness to account for both throughput and AoI. For example, we can define the utility of a device $i$ as $\log m_i + \log \frac{1}{h_i}=\log m_i - \log h_i$. Achieving proportional fairness in this setting is equivalent to solving the following optimization problem:
\begin{align}
\max &\sum_{i=1}^N (\log m_i - \log h_i)\\
\mbox{s.t. } &\{(m_i, h_i)|1\leq i\leq N\}\mbox{ is feasible.}
\end{align}
\end{Example}
\begin{Example}\label{example4}[Admission control for devices with AoI constraints]\\
Consider the scenario where each device $i$ has a hard AoI constraint $e_i$ and requires that $h_i\leq e_i$. When a new device arrives in the system, the BS needs to determine whether it is possible to satisfy the AoI constraints of all devices. By using the throughput-AoI capacity region, this admission control problem reduces to determining whether there exist $m_i$ such that the vector of throughput-AoI pairs $\{(m_i, e_i)|1\leq i\leq N\}$ is feasible.
\end{Example}

\section{SECOND-ORDER APPROXIMATION and
System-Wide Temporal Variance}\label{section:secorderappro}
Accurately determining the AoI for each device in a dynamic network environment is challenging due to the stochastic nature of wireless transmissions and scheduling policies. To address this challenge, we employ \emph{second-order analysis}~\cite{daojingsecondorder}. Specifically, we assume that the process $\{Z_i(1), Z_i(2), \dots\}$ evolves as a \emph{positive recurrent Markov process} to characterize the throughput-AoI capacity region in unreliable wireless networks.
In second-order analysis, each random process$\{Z_i(t)\}$ is characterized by its \emph{mean} and \emph{temporal variance}.
The mean of $\{Z_i(t)\}$ is defined as:
\begin{equation}
    \label{eq:mean}
    \mu_i = \lim_{T \to \infty} \frac{\sum_{t=1}^T Z_i(t)}{T}.
\end{equation}

The temporal variance of $\{Z_i(t)\}$ is defined as:
\begin{equation}
    \label{eq:temporal variance}
    \sigma_i^2 = \mathbb{E} \left[\left(\lim_{T \to \infty} \frac{\sum_{t=1}^T Z_i(t) - T\mu_i}{\sqrt{T}} \right)^2\right].
\end{equation}

The existence of the mean and temporal variance is guaranteed by the Markov Law of Large Numbers and Central Limit Theorem.

Previous studies have shown that AoI can be well-approximated as a function of $\mu_i$ and $\sigma_i^2$\cite{daojingsecondorder}:
\begin{equation}
    \label{eq:aoi_appro}
  \limsup\limits_{T \to \infty} \frac{\sum_{t=1}^{T} a_i(t)}{T}\approx\frac{1}{2} \left(\frac{\sigma_i^2}{\mu_i^2} + \frac{1}{\mu_i} \right) + \frac{1}{2}.
\end{equation}

Assuming that the approximation is exact, then the throughput-AoI capacity region can be characterized by determining the set of all feasible \(\{(\mu_i, \sigma_i^2) \mid 1 \leq i \leq N\}\) under network constraints.

However, it is difficult to characterize the variance $\sigma_i^2$ of a device $i$ directly since it depends on both the channel quality $p_i$ and the employed scheduling policy. To overcome this challenge, we introduce a \emph{projected process}, which enables the analysis of \emph{system-wide variance}, a quantity that exhibits more structured behavior and remains invariant to scheduling policies~\cite{qostheory}. By studying the system-wide variance, we can later establish its relationship with the variance of individual devices.

We define the projected process \( X(t) \) as follows:
\begin{equation}
    \label{eq:martingale}
    X(t) := Mt - \sum_{\tau=1}^t \sum_{i=1}^N \frac{Z_i(\tau)}{p_i}.
\end{equation}

At each time slot \( t+1 \), the BS selects a set \( S(t+1) = \{i_1, i_2, \ldots, i_M\} \) of \( M \) devices for transmission. For each \( i \in S(t+1) \), the random variable \( Z_i(t+1) \) follows a Bernoulli distribution with mean \( p_i \). For each $i\notin S(t+1)$, $Z_i(t+1)=0$. Hence, we have:
\begin{align}\nonumber
\mathbb{E}[X(t+1)\mid{H}^t] &= X(t)+M-\sum_{i\in S(t+1)} \left( \frac{1}{p_{i_k}} \cdot p_{i_k} + 0 \cdot (1 - p_{i_k}) \right) \\ 
&= X(t)  
\end{align}
for any $X(t)$ and $S(t+1)$, where ${H}^t$ is the system history up to time $t$. Thus, the projected process $X(t)$ is a \emph{Martingale}.

We first study the mean of $X(t)$. Let $r_i(t)$ be the probability of $i\in S(t)$ and let $\bar{r}_i:=\lim_{T \to \infty} \sum_t r_i(t)/T$, which is the long-term average portion of time that device $i$ is selected for transmission. Previous work \cite{qostheory} has shown that, via the martingale stability theorem, \begin{equation}
\mu_i=p_i\bar{r}_i, \label{eq:r-mu relation}
\end{equation}
and hence $\lim_{T \to \infty} X(T)/T=0$.

Next, we study the variance of $X(t)$, defined as \\
$\sigma^2:=\lim_{T \to \infty} E[X^2(T)/T]$. First, we note that:
\begin{equation}
    \label{eq:sysvariance}
    \begin{aligned}
        \sigma_t^2 &:=E\left[ \left( X(t+1) - X(t) \right)^2 \,\middle|\, H^t \right]=\sum_{i\in S(t+1)}\frac{E[Z^2_i(t+1)]-(E[Z_i(t+1)])^2}{p_i^2}\\
        &=\sum_{i\in S(t+1)}(\frac{1}{p_i}-1).
    \end{aligned}
\end{equation}

By the martingale central limit theorem, we have the long-term system variance:
\begin{align}\label{eq:longtermvariance}
\sigma^2=\sum_{i=1}^N\bar{r}_i(\frac{1}{p_i}-1)=\sum_{i=1}^N\frac{\mu_i}{p_i}(\frac{1}{p_i}-1).
\end{align}

An important property of the projected process $X(t)$ is that: its mean, $\lim_{T \to \infty} X(T)/T$, is a constant 0 under any scheduling policy and its temporal variance $\sigma^2$, is also a constant among all policies that provide the same throughput $\mu_i$. In the following analysis, we will leverage this result to establish the relationship between individual device variance and the system-wide variance.
\section{An outer bound of the Capacity Region}\label{section:outerbound}
In this section, we derive a necessary condition for a given vector of throughput-AoI pairs \(\{(m_i, h_i) \mid 1 \leq i \leq N\}\) to be in the throughput-AoI capacity region.

\begin{theorem}\label{theorem:neccssary condition}
If a vector of throughput-AoI pairs \(\{(m_i, h_i) \mid 1 \leq i \leq N\}\) is in the throughput-AoI capacity region, then there must exist vectors \(\{(\mu_i, \sigma_i^2) \mid 1 \leq i \leq N\}\) that satisfy the following conditions:
\begin{align}
&\mu_i \geq m_i, \forall i,\label{eq:mean1}\\
&\frac{1}{2} \left( \frac{\sigma_i^2}{\mu_i^2} + \frac{1}{\mu_i} \right) + \frac{1}{2}\leq h_i, \forall i,\label{eq:aoi}\\
&\sum_{i=1}^{N} \frac{\mu_i}{p_i} = M,\label{eq:somean}\\
&0 \leq \frac{\mu_i}{p_i} \leq 1,\label{eq:onetime}\\
&\sum_{i=1}^{N} \sqrt{\frac{\sigma_i^2}{p_i^2}} \geq \sqrt{\sigma^2} = \sqrt{\sum_{i=1}^{N} \frac{\mu_i}{p_i} \left( \frac{1}{p_i} - 1 \right)}, \label{eq:sovariance}
\end{align}
\end{theorem}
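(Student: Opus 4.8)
The plan is to start from the hypothesis that $\{(m_i,h_i)\}$ lies in the capacity region, so by Definition~\ref{def:capacity region} there is a scheduling policy under which $\liminf_{T\to\infty}\frac1T\sum_{t=1}^T Z_i(t)\ge m_i$ and $\limsup_{T\to\infty}\frac1T\sum_{t=1}^T a_i(t)\le h_i$. Because each $\{Z_i(t)\}$ is a positive recurrent Markov process, the limit defining $\mu_i$ in \eqref{eq:mean} and the temporal variance $\sigma_i^2$ exist, and I take exactly these quantities as the witnesses $\{(\mu_i,\sigma_i^2)\}$. Condition \eqref{eq:mean1} is then immediate: $\mu_i$ equals the almost-sure limit of the empirical throughput, so $\mu_i=\liminf_{T}\frac1T\sum_t Z_i(t)\ge m_i$. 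Condition \eqref{eq:aoi} follows by substituting these $\mu_i,\sigma_i^2$ into the second-order approximation \eqref{eq:aoi_appro} and using that, under the exactness assumption, its left-hand side equals $\limsup_T \frac1T\sum_t a_i(t)\le h_i$.

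Next I would establish \eqref{eq:somean} and \eqref{eq:onetime} from the selection structure. Writing $\bar r_i$ for the long-run fraction of slots in which device $i$ is scheduled, the martingale identity $\mu_i=p_i\bar r_i$ gives $\mu_i/p_i=\bar r_i$. Since $\bar r_i$ is a long-run average of the scheduling probabilities $r_i(t)\in[0,1]$, it lies in $[0,1]$, which is \eqref{eq:onetime}. Because exactly $M$ devices are selected in every slot, $\sum_i r_i(t)=M$ for all $t$; averaging over $t$ yields $\sum_i \bar r_i = M$, i.e. $\sum_i \mu_i/p_i=M$, which is \eqref{eq:somean}.

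The main work is the variance inequality \eqref{eq:sovariance}, and this is where I expect the difficulty. The idea is to rewrite the projected process \eqref{eq:martingale} in terms of the centered device sums: using \eqref{eq:somean} to replace $Mt$ by $\sum_i (t\mu_i)/p_i$, I obtain
\begin{equation}\nonumber
X(t) = -\sum_{i=1}^N \frac{1}{p_i}\Big(\sum_{\tau=1}^t Z_i(\tau) - t\mu_i\Big),
\end{equation}
so that $X(t)/\sqrt t = -\sum_i p_i^{-1}\big(\sum_\tau Z_i(\tau)-t\mu_i\big)/\sqrt t$. By the definition of temporal variance each summand $\big(\sum_\tau Z_i(\tau)-t\mu_i\big)/\sqrt t$ has limiting standard deviation $\sqrt{\sigma_i^2}$, while by the martingale CLT $X(t)/\sqrt t$ has limiting standard deviation $\sqrt{\sigma^2}$ with $\sigma^2$ given by \eqref{eq:longtermvariance}. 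Applying the triangle (Minkowski) inequality for the $L^2$ norm, the standard deviation of the linear combination is at most the sum of the scaled standard deviations:
\begin{equation}\nonumber
\sqrt{\sigma^2}\;\le\;\sum_{i=1}^N \frac{1}{p_i}\sqrt{\sigma_i^2}\;=\;\sum_{i=1}^N \sqrt{\frac{\sigma_i^2}{p_i^2}},
\end{equation}
which is exactly \eqref{eq:sovariance}.

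The delicate point I would need to justify carefully is the interchange of the $t\to\infty$ limit with the variance and the inequality. The device sums may be correlated across $i$, since a shared scheduling policy induces dependence, so the correct bound is the subadditivity of the standard deviation rather than additivity of the variances; the advantage of routing everything through $X(t)$ is precisely that its temporal variance is a policy-invariant constant \eqref{eq:longtermvariance}, so no cross-covariances need be tracked on the left. Concretely, I would apply Minkowski's inequality to the finite-$t$ normalized quantities, for which $\mathrm{sd}\big(X(t)/\sqrt t\big)\le \sum_i p_i^{-1}\,\mathrm{sd}\big((\textstyle\sum_\tau Z_i(\tau)-t\mu_i)/\sqrt t\big)$ holds for every $t$, and then pass to the limit using the convergence of each side guaranteed by the assumed Markov/martingale central limit theorem.
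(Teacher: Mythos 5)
Your proposal is correct and follows essentially the same route as the paper: identical choice of witnesses $(\mu_i,\sigma_i^2)$, identical arguments for conditions \eqref{eq:mean1}--\eqref{eq:onetime}, and the variance bound \eqref{eq:sovariance} obtained by routing everything through the policy-invariant projected process $X(t)$. Your invocation of Minkowski's inequality in $L^2$ is just a repackaging of the paper's step, which expands $\left(\sum_i \sqrt{\mathbb{E}[\hat{Z_i}^2/p_i^2]}\right)^2$ and bounds the cross terms via Cauchy--Schwarz --- exactly the standard proof of that triangle inequality.
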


\begin{proof}\label{proof:neccssary condition}
Consider a scheduling policy that achieves the vector of throughput-AoI pairs $\{(m_i, h_i) \}$. Let $\mu_i$ and $\sigma_i^2$ be the mean and temporal variance of $Z_i(t)$ under this scheduling policy. We will show that the vector $\{(m_i, h_i) \}$ satisfies Eq.~\eqref{eq:mean1} - Eq.~\eqref{eq:sovariance}.

Eq.~\eqref{eq:mean1}and Eq.~\eqref{eq:aoi} are obviously necessary as they state that the empirical throughput needs to be at least $m_i$, and the empirical AoI needs to be at most $h_i$.

We then establish condition Eq. ~\eqref{eq:somean} and Eq. ~\eqref{eq:onetime}. By Eq. (\ref{eq:r-mu relation}),  We have $\sum_i\frac{\mu_i}{p_i}=\sum_i \bar{r}_i$. Since the BS schedules $M$ devices for transmission in each time slot, we have $\sum_i\bar{r}_i=M$. Hence, Eq. ~\eqref{eq:somean} holds. Also, it is obvious that Eq. ~\eqref{eq:onetime} holds.

Next, we establish condition Eq.~\eqref{eq:sovariance}. 
We begin by defining the random variable \(\hat{Z_i}\) as:
\begin{equation}
    \hat{Z_i} := \lim_{T \to \infty} \frac{\sum_{t=1}^T Z_i(t) - T\mu_i}{\sqrt{T}}.
    \label{eq:z_hat}
\end{equation}

Let  \(\hat{X}\) be defined as:
\begin{equation}
    \hat{X} := \lim_{T \to \infty} \frac{MT - \sum_{t=1}^T \sum_{i=1}^N \frac{Z_i(t)}{p_i}}{\sqrt{T}}=-\sum_{i=1}^{N}\frac{\hat{Z_i}}{p_i}.
    \label{eq:x_hat}
\end{equation}

Then:
\begin{align}
    &\left( \sum_{i=1}^N \sqrt{\frac{\sigma_i^2}{p_i^2}} \right)^2 
    = \left( \sum_{i=1}^N \sqrt{\mathbb{E}\left[\frac{\hat{Z_i}^2}{p_i^2}\right]} \right)^2 (\because Eq.~\eqref{eq:temporal variance})\nonumber \\
    =& \sum_{i=1}^N \mathbb{E}\left[\frac{\hat{Z_i}^2}{p_i^2}\right] 
    + 2 \sum_{i \neq j} \sqrt{\mathbb{E}\left[\frac{\hat{Z_i}^2}{p_i^2}\right] \mathbb{E}\left[\frac{\hat{Z_j}^2}{p_j^2}\right]} && \nonumber \\
    \geq& \sum_{i=1}^N \mathbb{E}\left[\frac{\hat{Z_i}^2}{p_i^2}\right] 
    + 2 \sum_{i \neq j} \mathbb{E}\left[\frac{\hat{Z_i}}{p_i} \cdot \frac{\hat{Z_j}}{p_j}\right] 
    (\because \text{Cauchy-Schwarz inequality}) && \nonumber \\
    =& \mathbb{E} \left[ \left( \sum_{i=1}^N \frac{\hat{Z_i}}{p_i} \right)^2 \right] = \mathbb{E}[\hat{X}^2]= \sigma^2.(\because Eq.~\eqref{eq:longtermvariance}) && \label{eq:cauchy_schwarz}
\end{align}

Thus, condition Eq.~\eqref{eq:sovariance} holds. This completes the proof.

\end{proof}
Therefore, identifying all parameter pairs \( \{(\mu_i, \sigma_i^2) \mid 1 \leq i \leq N\} \) that satisfy Eqs.~\eqref{eq:mean1}--\eqref{eq:sovariance} yields an outer bound on the capacity region.

\section{A tight inner bound of the Capacity Region}\label{section:innerbound}
In this section, we derive a sufficient condition for a given vector of throughput-AoI pairs \( \{(m_i, h_i) \mid 1 \leq i \leq N\} \) to belong to the throughput-AoI capacity region. Furthermore, we introduce a scheduling policy that achieves the desired throughputs and AoIs, provided they satisfy the sufficient condition.

\begin{theorem}\label{theorem:sufficient condition}
A vector of throughput-AoI pairs \( \{(m_i, h_i) \mid 1 \leq i \leq N\} \) is in the throughput-AoI capacity region if there exists a vector \( \{(\mu_i, \sigma^2_i) \mid 1 \leq i \leq N\} \) that satisfies the following conditions.
\begin{align}
\nonumber
&Eq.~\eqref{eq:mean1}-Eq.~\eqref{eq:somean}\\ 
&0<\frac{\mu_i}{p_i} < 1, \forall i, \label{eq:eachless1}\\
&\sum_{i=1}^{N} \sqrt{\frac{\sigma_i^2}{p_i^2}} = \sqrt{\sigma^2} = \sqrt{\sum_{i=1}^{N} \frac{\mu_i}{p_i} \left( \frac{1}{p_i} - 1 \right)}.\label{eq:sovariance2}
\end{align}
\end{theorem}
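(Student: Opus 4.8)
The plan is to establish achievability constructively: I will exhibit the Variance-Weighted Deficit (VWD) policy and show that it realizes any target profile $\{(\mu_i,\sigma_i^2)\}$ obeying Eqs.~\eqref{eq:mean1}--\eqref{eq:somean}, \eqref{eq:eachless1} and \eqref{eq:sovariance2}. If the policy attains mean exactly $\mu_i$ and temporal variance exactly $\sigma_i^2$ for every device, then its throughput is $\mu_i\ge m_i$ by Eq.~\eqref{eq:mean1}, and substituting these values into the AoI approximation Eq.~\eqref{eq:aoi_appro} gives an average AoI equal to the left-hand side of Eq.~\eqref{eq:aoi}, hence at most $h_i$. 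By Definition~\ref{def:capacity region} this places $\{(m_i,h_i)\}$ in the capacity region, so the whole task reduces to constructing a policy that hits the prescribed $(\mu_i,\sigma_i^2)$.

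The construction is guided by the meaning of the equality Eq.~\eqref{eq:sovariance2}. Inspecting the Cauchy--Schwarz step in the proof of Theorem~\ref{theorem:neccssary condition}, equality forces every normalized fluctuation $\hat Z_i/p_i$ to be perfectly positively correlated, i.e.\ driven by one common factor. Setting $\alpha_i:=\sqrt{\sigma_i^2/p_i^2}/\sqrt{\sigma^2}$, Eq.~\eqref{eq:sovariance2} gives $\sum_i\alpha_i=1$, and the target profile is precisely $\hat Z_i/p_i=-\alpha_i\hat X$ for all $i$, with $\hat X$ the system fluctuation of Eq.~\eqref{eq:x_hat}. The policy therefore has two jobs: (a) drive the long-run selection frequency of device $i$ to $\bar{r}_i=\mu_i/p_i$, which by Eq.~\eqref{eq:r-mu relation} produces mean $\mu_i$ and is feasible precisely because $\sum_i\mu_i/p_i=M$ and $0<\mu_i/p_i<1$; and (b) apportion the system-wide stochastic surplus among the devices in the fixed proportions $\alpha_i$, so that each device's fluctuation is a deterministic multiple of $\hat X$.

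I would realize both goals with a deficit rule. For each device maintain a deficit that grows by the target rate $\bar{r}_i$ every slot and shrinks by one whenever the device is selected, scaled by a weight tied to $\alpha_i$ (equivalently to $\sigma_i^2$); at each slot the BS transmits the $M$ devices with the largest weighted deficits. Because $\sum_i\bar{r}_i=M$ the aggregate target is exactly fillable and $\bar{r}_i<1$ prevents any device from saturating, so a quadratic Foster--Lyapunov function in the deficit vector shows that the joint deficit process is a positive recurrent Markov chain with bounded stationary moments. Positive recurrence pins the empirical selection frequency to $\bar{r}_i$, which yields the means through Eq.~\eqref{eq:r-mu relation} and guarantees that the limits in Eqs.~\eqref{eq:mean} and \eqref{eq:temporal variance} exist.

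The main obstacle is matching the variances exactly, that is, proving that the weighted-deficit dynamics produce the perfectly correlated profile $\hat Z_i/p_i=-\alpha_i\hat X$ rather than some other correlated-but-not-proportional allocation. The plan is to apply the martingale central limit theorem to the vector of cumulative deficits: since the deficits stay bounded, each device's normalized deviation $\hat Z_i$ is asymptotically slaved to its share of the common martingale increment carried by $X(t)$, and the weights are calibrated so that this share equals $\alpha_i$; this gives $E[(\hat Z_i/p_i)^2]=\alpha_i^2\sigma^2=\sigma_i^2/p_i^2$ and hence temporal variance $\sigma_i^2$. The delicate point is controlling the cross-correlations between devices---not merely their marginal variances---so that the Lyapunov drift forces the common fluctuation to be split in exactly the proportions $\alpha_i$; this is where the variance-weighting of the deficits, as opposed to an ordinary deficit policy, does the essential work.
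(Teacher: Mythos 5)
Your overall strategy is the same as the paper's (a variance-weighted deficit policy, Foster--Lyapunov positive recurrence, martingale CLT for the variances), and your reading of the equality \eqref{eq:sovariance2} as ``all fluctuations are perfectly correlated, split in proportions $\alpha_i$'' is exactly the right intuition. However, the policy you actually construct contains a fatal design error. You let each deficit grow by $\bar r_i=\mu_i/p_i$ per slot and shrink by one \emph{whenever the device is selected}. The paper's VWD deficit is $d_i(t)=\bigl(t\mu_i-\sum_{\tau\le t}Z_i(\tau)\bigr)/\sqrt{\sigma_i^2}$: it shrinks only on \emph{successful deliveries}. With selection-based deficits the scheduler is a deterministic dynamical system that never reacts to channel outcomes, so each device's delivery process is an independent Bernoulli thinning of an (essentially periodic) selection sequence. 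The resulting fluctuations $\hat Z_i$ are then independent across devices, with $\sigma_i^2=\bar r_i p_i(1-p_i)$, i.e.\ $\sigma_i^2/p_i^2=\frac{\mu_i}{p_i}\bigl(\frac{1}{p_i}-1\bigr)$, which gives $\sum_i \sigma_i^2/p_i^2=\sigma^2$ and hence $\sum_i\sqrt{\sigma_i^2/p_i^2}>\sqrt{\sigma^2}$ strictly whenever two or more devices have positive variance. In other words, the perfectly correlated profiles required by \eqref{eq:sovariance2} are precisely the ones your policy can never produce, and no choice of weights fixes this: correlation can only come from a scheduler that responds to failures by rescheduling, so that the system-wide shortfall (the martingale $X(t)$) is the quantity being shared out among devices.

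Relatedly, your key step ``since the deficits stay bounded, each $\hat Z_i$ is asymptotically slaved to its share of $X(t)$'' cannot be correct as stated. If delivery-based deficits were bounded, then $\sum_{\tau\le t}Z_i(\tau)=t\mu_i+O(1)$ and every $\hat Z_i$ would vanish, contradicting the policy-invariant fact that $\sigma^2>0$. What the paper proves (its Lemma on positive recurrence) is that only the \emph{differences} $W_i(t)=d_i(t)-D(t)$ are stochastically bounded, while the common component $D(t)$, which is proportional to $X(t)$, fluctuates on the $\sqrt{t}$ scale; this unbounded shared component is exactly what carries all the variance, and the normalization of $d_i$ by $\sqrt{\sigma_i^2}$ converts ``all deficits stay close together'' into ``device $i$'s fluctuation is $\sqrt{\sigma_i^2}$ times a common standard normal,'' after which the martingale CLT and \eqref{eq:sovariance2} give variance exactly $\sigma_i^2$. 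To repair your argument: redefine the deficit to track deliveries $Z_i(\tau)$, run your quadratic Lyapunov analysis on the $W_i$ (this is where \eqref{eq:somean} and the strict inequalities \eqref{eq:eachless1} supply the negative drift), and then apply the CLT to $D(T)/\sqrt{T}$ rather than to individual deficits.
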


Before proving Theorem~\ref{theorem:sufficient condition}, we first discuss its implications. Comparing Theorem~\ref{theorem:neccssary condition} and Theorem~\ref{theorem:sufficient condition}, we observe that their conditions are nearly identical, except that the sufficient condition requires the strict inequalities in Eq.~\eqref{eq:eachless1}, and a strict equality in Eq.~\eqref{eq:sovariance2}. This difference implies that the sufficient condition defines an inner bound that is nearly tight, with differences primarily occurring at the boundary.

To prove Theorem~\ref{theorem:sufficient condition}, we propose a scheduling policy under which all pairs \(\{(\mu_i, \sigma_i^2) \mid 1 \leq i \leq N\}\) satisfy the conditions given by Eq.~\eqref{eq:somean} and Eq.~\eqref{eq:eachless1}-Eq.~\eqref{eq:sovariance2}.

We define the \emph{deficit} of a device \( i \) at time \( t \) as follows:
\begin{equation}\label{eq:d_i}
d_i(t) = \frac{t \mu_i - \sum_{\tau=1}^{t} Z_i(\tau)}{\sqrt{\sigma_i^2}}.
\end{equation}

We note that $t\mu_i$ is the number of deliveries needed for device $i$ by time $t$ to have a mean throughput of $\mu_i$ and $\sum Z_i(\tau)$ is the actual number of deliveries for device $i$ up to time $t$. Hence, $t\mu_i-\sum Z_i(\tau)$ represents how much the actual deliveries for device $i$ deviates from the target mean $\mu_i$. We further normalize this deficit by the target variance $\sqrt{\sigma_i^2}$.

In each time slot, our scheduling policy simply sorts all devices by their deficits and then schedules the $M$ devices with the largest deficits. We name this policy the \emph{variance-weighted deficit} (VWD) policy. Under VWD, \( \{(m_i, h_i) \mid 1 \leq i \leq N\} \) feasible means that there exists\( \{(\mu_i, \sigma^2_i) \mid 1 \leq i \leq N\} \)that satisfy Eq.~\eqref{eq:mean1}-Eq.~\eqref{eq:somean} and Eq.~\eqref{eq:eachless1}-Eq.~\eqref{eq:sovariance2}.

To analyze the performance of the VWD policy, we first define the system-wide deficit as: 
\begin{equation}\label{eq:D_i}
D(t) = \frac{\sum_{i=1}^{N} \sqrt{\frac{\sigma_i^2}{p_i^2}} d_i(t)}
{\sum_{i=1}^{N} \sqrt{\frac{\sigma_i^2}{p_i^2}}}.
\end{equation}

We also define $W_i(t)=d_i(t)-D(t)$ and consider the Lyapunov function $L(t):=\frac{1}{2}\sum_i \sqrt{\frac{\sigma_i^2}{p_i^2}}W_i^2(t)$. 

The VWD policy schedules the device with the largest \(d_i(t-1)\), which is also the device with the largest $W_i(t-1)$. Hence, under the VWD policy, the system can be modeled as a Markov process whose state consists of $W_i(t)$ of all devices. We first show that the Markov process is positive-recurrent.

Let $\Delta d_i(t):=d_i(t)-d_i(t-1)$, $\Delta D(t):= D(t)-D(t-1)$, and $\Delta L(t):= L(t)-L(t-1)$, we can then derive the expected one-step Lyapunov drift as follows:
\begin{align}
    &\Delta(L(t)) := E[L(t) - L(t - 1) \mid H^{t-1}]  \nonumber \\
    =&E\Bigg[\frac{1}{2} \sum_{i=1}^N  \sqrt{\frac{\sigma_i^2}{p_i^2}}W_i^2(t) - \frac{1}{2} \sum_{i=1}^N \sqrt{\frac{\sigma_i^2}{p_i^2}}W_i^2(t-1)\;\Bigg| H^{t-1}\Bigg]  \nonumber \\
    =&E\Bigg[\sum_{i=1}^N \sqrt{\frac{\sigma_i^2}{p_i^2}}W_i(t-1) \left( {\Delta d_i(t)} - \Delta D(t) \right) \nonumber \\
    &+\frac{1}{2} \sum_{i=1}^N\sqrt{\frac{\sigma_i^2}{p_i^2}}\left( {\Delta d_i(t)} - \Delta D(t) \right)^2 \;\Bigg| H^{t-1}\Bigg]  \nonumber \\
    \leq& B + E\Bigg[\sum_{i=1}^N\sqrt{\frac{\sigma_i^2}{p_i^2}} W_i(t-1) \Delta d_i(t) - \sum_{i=1}^N \sqrt{\frac{\sigma_i^2}{p_i^2}}W_i(t-1) \Delta D(t) \;\Bigg| H^{t-1}\Bigg]  \nonumber \\
    =&B + E\Bigg[\sum_{i=1}^N \sqrt{\frac{\sigma_i^2}{p_i^2}}W_i(t-1) \Delta d_i(t) \;\Bigg| H^{t-1}\Bigg],  \label{eq:delta_L}
\end{align}
where \(B\) is a bounded constant. The last two steps hold because \(\Delta d_i(t)\) and \(\Delta D(t)\) are bounded and \(\sum_{i=1}^N\sqrt{\frac{\sigma_i^2}{p_i^2}}d_i(t-1) = \sum_{i=1}^N \sqrt{\frac{\sigma_i^2}{p_i^2}} D(t-1)\), we can derive that \( E[\Delta D(t)] = 0 \).

\begin{lemma}\label{lemma:positive recurrent}
Assume that the sufficient conditions are satisfied. Then, under the VWD policy, the system-wide Markov process, whose state consists of $W_i(t)$, is positive-recurrent.
\end{lemma}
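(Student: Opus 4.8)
The plan is to verify the Foster--Lyapunov drift criterion for the Lyapunov function $L(t)$, picking up from the drift bound already established in Eq.~\eqref{eq:delta_L}. The first step is to make the conditional drift explicit. Since $d_i(t)-d_i(t-1)=(\mu_i-Z_i(t))/\sqrt{\sigma_i^2}$, and under VWD the scheduling indicator $r_i(t)\in\{0,1\}$ (whether device $i$ is among the $M$ largest deficits) is determined by $H^{t-1}$, we have $E[\Delta d_i(t)\mid H^{t-1}]=(\mu_i-r_i(t)p_i)/\sqrt{\sigma_i^2}$. Substituting this together with $\sqrt{\sigma_i^2/p_i^2}=\sqrt{\sigma_i^2}/p_i$ into Eq.~\eqref{eq:delta_L} collapses the weighted sum to
\[
\Delta(L(t))\le B+\sum_{i=1}^N W_i(t-1)\Big(\tfrac{\mu_i}{p_i}-r_i(t)\Big).
\]
Everything then reduces to showing that this last sum is not merely nonpositive but strictly negative, with magnitude growing in the state.

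To this end I would write $c_i:=r_i(t)-\mu_i/p_i$, so the drift term equals $-\sum_i c_iW_i(t-1)$. Because VWD schedules exactly the $M$ devices with the largest $W_i$ (equivalently largest $d_i$), we have $c_i=1-\mu_i/p_i$ on the scheduled set and $c_i=-\mu_i/p_i$ off it; by the strict bounds of Eq.~\eqref{eq:eachless1} and finiteness of $N$, every scheduled coefficient is at least $\delta:=\min_i\min\{\mu_i/p_i,\,1-\mu_i/p_i\}>0$ and every unscheduled coefficient is at most $-\delta$. Crucially, Eq.~\eqref{eq:somean} gives $\sum_i c_i=M-\sum_i\mu_i/p_i=0$, which permits recentering: for any threshold $\theta$ between the $M$-th and $(M{+}1)$-th largest entries of $W(t-1)$, $\sum_i c_iW_i=\sum_i c_i(W_i-\theta)$, and every summand is nonnegative since positive coefficients multiply the large (scheduled) entries $W_i\ge\theta$ while negative coefficients multiply the small (unscheduled) entries $W_i\le\theta$. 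Retaining only the two extreme terms and using the identity $\sum_i\sqrt{\sigma_i^2/p_i^2}\,W_i=0$ (so that $\max_iW_i\ge0\ge\min_iW_i$ and hence $\max_iW_i-\min_iW_i\ge\|W\|_\infty$), I obtain
\[
\sum_{i=1}^N c_iW_i(t-1)\ \ge\ \delta\big(\max_iW_i(t-1)-\min_iW_i(t-1)\big)\ \ge\ \delta\,\|W(t-1)\|_\infty.
\]

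Combining the two displays yields $\Delta(L(t))\le B-\delta\,\|W(t-1)\|_\infty$, which is at most $-1$ once $\|W(t-1)\|_\infty>(B+1)/\delta$. Since $B$ is the bounded constant from Eq.~\eqref{eq:delta_L} and the per-slot increments of $W$ are bounded, the drift is uniformly bounded above and strictly negative outside the bounded set $\{\|W\|_\infty\le(B+1)/\delta\}$, and the Foster--Lyapunov criterion delivers positive recurrence. I expect the main obstacle to be the second step: upgrading the elementary observation that VWD maximizes $\sum_i r_iW_i$ (which alone gives only a \emph{nonpositive} drift) into the quantitative bound $\delta\|W\|_\infty$. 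The recentering trick made possible by $\sum_i c_i=0$, together with the strict interiority of Eq.~\eqref{eq:eachless1}, is precisely what supplies the required slack. A secondary technical point is that the state space is continuous, so a rigorous invocation of Foster--Lyapunov needs the bounded sublevel set to be petite; I would establish this via a standard Meyn--Tweedie argument, using the per-slot Bernoulli randomness of the $Z_i(t)$ to furnish the necessary irreducibility, rather than treating the chain as countable.
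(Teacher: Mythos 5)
Your proof is correct and follows the same overall strategy as the paper's: both start from the drift bound in Eq.~\eqref{eq:delta_L}, reduce the problem to showing that $\sum_i \sqrt{\sigma_i^2/p_i^2}\,W_i\,E[\Delta d_i(t)\mid H^{t-1}] = \sum_i W_i(\mu_i/p_i - r_i)$ is strictly negative and growing in the state, exploit the slack $\varepsilon$ (your $\delta$) provided by the strict inequalities of Eq.~\eqref{eq:eachless1}, and conclude with Foster--Lyapunov. The one place you diverge is the algebraic device for the key inequality: the paper sorts the $W_i$ and applies Abel summation (Eq.~\eqref{eq:lemma1}, with the partial-sum coefficients bounded by $-\varepsilon$ in Eqs.~\eqref{eq:ineq1}--\eqref{eq:ineq2}), whereas you use the recentering trick $\sum_i c_i W_i = \sum_i c_i(W_i - \theta)$ enabled by $\sum_i c_i = 0$ (which is Eq.~\eqref{eq:somean}) with $\theta$ a threshold separating scheduled from unscheduled devices. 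The two manipulations are equivalent and yield the identical bound $-\delta\left(\max_i W_i - \min_i W_i\right) \leq -\delta \|W\|_\infty$; yours is arguably cleaner since it avoids carrying telescoping sums. More substantively, your endgame is more rigorous than the paper's: the paper asserts that the $W_i(t)$ evolve on a \emph{finite} state space, which is not true (the deficits take countably many, dense values because of the $t\mu_i$ term), so the classical Foster criterion with a finite exception set does not directly apply. Your observation that the sublevel set must be shown to be petite, via a Meyn--Tweedie/Harris-chain argument using the Bernoulli transmission randomness, is exactly the repair needed at that step --- though note the state space is countable rather than continuous, which does not change the fix.
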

\begin{proof}\label{proof:positive recurrent}
Without loss of generality, we assume that \( W_1(t) \geq W_2(t) \geq \dots \geq W_N(t) \).
By the design of VWD, the BS schedules devices \( 1 \sim M \) at time \( t \). Hence, 
\begin{equation}
E[\Delta d_i(t)] =
\begin{cases}
\frac{\mu_i-p_i}{\sigma_i}, & \quad \text{for } i \in \{1, 2, \dots, M\}, \\
\frac{\mu_i}{\sigma_i}, & \quad \text{for } i \in \{M+1, \dots, N\}.
\end{cases}
\label{eq:delta_di}
\end{equation}

We then have
\begin{align}
&E \left[ \sum_{i=1}^{N} \sqrt{\frac{\sigma_i^2}{p_i^2}}W_i(t) \Delta d_i(t) \right]  =\sum_{i=1}^{M} W_i(t) ( \frac{\mu_i}{p_i} - 1) 
+ \sum_{i=M+1}^{N} W_i(t) \frac{\mu_i}{p_i} \nonumber \\ 
&= \sum_{i=1}^{M} (W_i(t) - W_{i+1}(t)) \left( \sum_{j=1}^{i} \frac{\mu_j}{p_j} - i \right) \nonumber \\ 
&\quad + \sum_{i=M+1}^{N} (W_i(t) - W_{i+1}(t)) \left( \sum_{j=1}^{i} \frac{\mu_j}{p_j} - M \right) \nonumber 
+ W_N \left( \sum_{i=1}^{N} \frac{\mu_i}{p_i} - M \right) \nonumber \\ 
&= \sum_{i=1}^{M} (W_i(t) - W_{i+1}(t)) \left( \sum_{j=1}^{i} \frac{\mu_j}{p_j} - i \right) \nonumber \\ 
&\quad + \sum_{i=M+1}^{N} (W_i(t) - W_{i+1}(t)) \left( \sum_{j=1}^{i} \frac{\mu_j}{p_j} - M \right). \label{eq:lemma1}
\end{align}

Let 
$\varepsilon = \min \left\{ \min_i \frac{\mu_i}{p_i}, 1 - \max_i \frac{\mu_i}{p_i} \right\}.$ We have $\epsilon >0$ due to Eq.~\eqref{eq:eachless1}. 
Then, for any \( i \in \{1, 2, \dots, M\} \),
\begin{equation}
\sum_{j=1}^{M} \frac{\mu_j}{p_j} - i < -\varepsilon,
\label{eq:ineq1}
\end{equation}
and, for any \( i \in \{M+1, \dots, N-1\} \),
\begin{equation}
\sum_{j=1}^{M} \frac{\mu_j}{p_j} - M = \sum_{j=M+1}^{N} \frac{\mu_j}{p_j} < -\varepsilon.
\label{eq:ineq2}
\end{equation}

Therefore, combining Eq. ~\eqref{eq:lemma1}, Eq. ~\eqref{eq:ineq1} and Eq. ~\eqref{eq:ineq2}, we can get:
\begin{align}   
&E \left[ \sum_{i} \sqrt{\frac{\sigma_i^2}{p_i^2}}W_i(t) \Delta d_i(t) \right]< -\varepsilon \left( W_1(t) - W_N(t) \right)< -\varepsilon W_1(t)
\label{eq:expectation_ineq}
\end{align}
The last step holds because $\sum_{i=1}^{N}\sqrt{\frac{\sigma_i^2}{p_i^2}} W_i(t)=0, W_N(t)$ must be negative.
Thus, if
\[
\max_{i} W_i(t) > \frac{B}{\varepsilon},
\]
then \( E[\Delta L] < 0 \).

Since $W_i(t)$ of each device evolves according to a Markov process with a finite state space, the system dynamics can be analyzed within this framework. Obviously, all system states satisfying 
$\max_{i} W_i(t) \leq \frac{B}{\varepsilon}$ belong to a finite set of states. By the Foster-Lyapunov Theorem, it follows that the system-wide Markov process is positive-recurrent.
\end{proof}
We now show that the VWD policy achieves the desired throughputs and AoIs that satisfy the sufficient conditions, thereby establishing Theorem ~\ref{theorem:sufficient condition}.\\
\begin{theorem}\label{theorem:verification}
Assume that the sufficient conditions are satisfied. Then, under the VWD policy, the following holds:
\begin{align}
&\lim_{T \to \infty} \frac{\sum_{t=1}^T Z_i(t)}{T} = \mu_i, \forall i,\\
&\mathbb{E}\left[\left(\lim_{T \to \infty} \frac{\sum_{t=1}^T Z_i(t) - T \mu_i}{\sqrt{T}}\right)^2\right] = \sigma_i^2, \forall i.
\end{align}
\end{theorem}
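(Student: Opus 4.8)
The plan is to leverage Lemma~\ref{lemma:positive recurrent} together with the key observation that the normalized system-wide deficit $D(t)$ is, up to a fixed constant, exactly the projected martingale $X(t)$. First I would substitute the definition of $d_i(t)$ from Eq.~\eqref{eq:d_i} into the numerator of $D(t)$ in Eq.~\eqref{eq:D_i}. Since $\sqrt{\sigma_i^2/p_i^2}\,d_i(t) = \frac{1}{p_i}\bigl(t\mu_i - \sum_{\tau=1}^t Z_i(\tau)\bigr)$, summing over $i$ and invoking $\sum_i \mu_i/p_i = M$ from Eq.~\eqref{eq:somean} shows the numerator equals $Mt - \sum_{\tau=1}^t\sum_i Z_i(\tau)/p_i = X(t)$. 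Hence $D(t) = X(t)\big/\sum_i\sqrt{\sigma_i^2/p_i^2}$, which ties the system-wide deficit directly to the martingale whose first- and second-order behavior are already established in Section~\ref{section:secorderappro}.

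Next, since Lemma~\ref{lemma:positive recurrent} guarantees the chain $\{W_i(t)\}$ is positive-recurrent, I would argue that each $W_i(t)$ is tight, so that $W_i(t)/t \to 0$ and $W_i(t)/\sqrt{t} \to 0$. For the mean, write $d_i(t) = W_i(t) + D(t)$; dividing by $t$ and using $X(t)/t \to 0$ (martingale stability, noted after Eq.~\eqref{eq:r-mu relation}) gives $d_i(t)/t \to 0$, i.e. $\bigl(t\mu_i - \sum_\tau Z_i(\tau)\bigr)/t \to 0$, which is precisely the first claim $\lim_T \sum_t Z_i(t)/T = \mu_i$.

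For the variance, recall from Eq.~\eqref{eq:z_hat} that $\hat{Z}_i = -\sqrt{\sigma_i^2}\,\lim_T d_i(T)/\sqrt{T}$. Because $W_i(T)/\sqrt{T} \to 0$, the individual and system-wide deficits share the same normalized limit, so $\lim_T d_i(T)/\sqrt{T} = \lim_T D(T)/\sqrt{T} = \hat{X}\big/\sum_j \sqrt{\sigma_j^2/p_j^2}$ by Eq.~\eqref{eq:x_hat}. Therefore $\hat{Z}_i = -\sqrt{\sigma_i^2}\,\hat{X}\big/\sum_j\sqrt{\sigma_j^2/p_j^2}$, and taking the expectation of its square with $\mathbb{E}[\hat{X}^2] = \sigma^2$ from Eq.~\eqref{eq:longtermvariance} yields $\mathbb{E}[\hat{Z}_i^2] = \sigma_i^2\,\sigma^2\big/\bigl(\sum_j\sqrt{\sigma_j^2/p_j^2}\bigr)^2$. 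The strict equality Eq.~\eqref{eq:sovariance2} forces $\bigl(\sum_j\sqrt{\sigma_j^2/p_j^2}\bigr)^2 = \sigma^2$, so the ratio collapses to $1$ and the temporal variance is exactly $\sigma_i^2$. This is exactly the point at which the equality in the sufficient condition, rather than the inequality of the outer bound in Eq.~\eqref{eq:sovariance}, becomes essential.

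The main obstacle I anticipate is the rigorous passage from positive recurrence to $W_i(T)/\sqrt{T} \to 0$ and the existence of the almost-sure (or in-distribution) limits in Eqs.~\eqref{eq:z_hat}--\eqref{eq:x_hat}. Positive recurrence alone supplies a stationary distribution; to conclude $W_i(T)/\sqrt{T} \to 0$ I would combine tightness with the fact that the increments $\Delta d_i(t)$ and $\Delta D(t)$ are uniformly bounded, so $W_i$ has bounded jumps and cannot grow on the $\sqrt{T}$ scale. To justify passing the limit inside the expectation in the variance computation, I would appeal to the Markov central limit theorem, which the modeling assumption that $\{Z_i(t)\}$ evolves as a positive-recurrent Markov process is designed to guarantee.
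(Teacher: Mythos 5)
Your proposal is correct and follows essentially the same route as the paper's own proof: rewriting $D(t)$ as the martingale $X(t)$ divided by $\sum_i\sqrt{\sigma_i^2/p_i^2}$, using the positive recurrence from Lemma~\ref{lemma:positive recurrent} to equate the $T$- and $\sqrt{T}$-normalized limits of $d_i(T)$ and $D(T)$, and then applying the martingale law of large numbers and central limit theorem for $X(T)$ together with the equality in Eq.~\eqref{eq:sovariance2} to obtain the mean $\mu_i$ and variance $\sigma_i^2$. If anything, your write-up is slightly more explicit than the paper about where the strict equality of Eq.~\eqref{eq:sovariance2} (as opposed to the inequality Eq.~\eqref{eq:sovariance}) is essential, and about the technical step from positive recurrence to $W_i(T)/\sqrt{T}\to 0$, which the paper asserts without elaboration.
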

\begin{proof}\label{proof:verification}
Since the system-wide Markov process is positive recurrent under the VWD policy, we have:
\begin{equation}\label{eq:T}
\lim_{T \to \infty} \frac{d_i(T)- D(T)}{T} \to 0, \quad \forall i,
\end{equation}
\begin{equation}\label{eq:sqrt_T}
\lim_{T \to \infty} \frac{d_i(T) - D(T)}{\sqrt{T}} \to 0, \quad \forall i. 
\end{equation}

First, we show that \(\lim_{T \to \infty} \frac{\sum_{t=1}^T Z_i(t)}{T} = \mu_i\), \(\forall i\). Recall that \(d_i(t) = \frac{t \mu_i - \sum_{\tau=1}^{t} Z_i(\tau)}{\sqrt{\sigma_i^2}}.\) and \(D(t) = \frac{\sum_{i=1}^N \sqrt{\frac{\sigma_i^2}{p_i^2}} d_i(t)}{\sum_{i=1}^N \sqrt{\frac{\sigma_i^2}{p_i^2}} }\). By Eq.~\eqref{eq:somean}, we have:
\begin{align}
&\lim_{T \to \infty} \frac{D(T)}{T} \nonumber
= \lim_{T \to \infty} 
\frac{\sum_{i=1}^N\sqrt{\frac{1}{p_i^2}}( T {\mu_i} - \sum_{t=1}^T {Z_i(t)})}
{T \sum_{i=1}^N \sqrt{\frac{\sigma_i^2}{p_i^2}}} \\ \nonumber
&= \lim_{T \to \infty} 
\frac{ MT - \sum_{i=1}^N \sum_{t=1}^T \frac{Z_i(t)}{p_i}}
{T \sum_{i=1}^N \sqrt{\frac{\sigma_i^2}{p_i^2}}} \\
&= \lim_{T \to \infty} \frac{M - \frac{1}{T} \sum_{i=1}^N \sum_{t=1}^T \frac{Z_i(t)}{p_i}}
{\sum_{i=1}^N \sqrt{\frac{\sigma_i^2}{p_i^2}}}\label{eq:mean3}
\end{align}

According to the Law of Large Numbers, we have:\\
\(\frac{1}{T} \sum_{i=1}^N \sum_{t=1}^T \frac{Z_i(t)}{p_i} \to M\) as \(T \to \infty\).

Therefore, since \(\lim_{T \to \infty} \frac{D(T)}{T} = 0\), according to Eq. ~\eqref{eq:T}, we obtain:
\[
\lim_{T \to \infty} \frac{d_i(T)}{T} = \frac{\mu_i}{\sqrt\sigma_i^2} - \lim_{T \to \infty} \frac{\sum_{t=1}^T \frac{Z_i(t)}{\sqrt\sigma_i^2}}{T} = 0, \quad \text{for all } i.
\]

Rearranging the terms, we conclude:
\(\lim_{T \to \infty} \frac{\sum_{t=1}^T Z_i(t)}{T} = \mu_i\).

Next, we show that $\mathbb{E} [ ( \lim_{T \to \infty} \frac{\sum_{t=1}^T Z_i(t) - T \mu_i }{\sqrt{T}} )^2 ] = \sigma_i^2, \quad \forall i$.
\begin{align}
&\lim_{T \to \infty} \frac{D(T)}{\sqrt{T}} =\lim_{T \to \infty} \nonumber
\frac{\sum_{i=1}^N\sqrt{\frac{1}{p_i^2}}( T {\mu_i} - \sum_{t=1}^T {Z_i(t)})}
{\sqrt T \sum_{i=1}^N \sqrt{\frac{\sigma_i^2}{p_i^2}}} \\\nonumber
&= \lim_{T \to \infty} 
\frac{MT - \sum_{i=1}^N \sum_{t=1}^T \frac{Z_i(t)}{p_i}}
{\sqrt{T} \sum_{i=1}^N \sqrt{\frac{\sigma_i^2}{p_i^2}}} \\
&= \lim_{T \to \infty} 
\frac{X(T)}{\sqrt{T} \sum_{i=1}^N \sqrt{\frac{\sigma_i^2}{p_i^2}}}.\label{eq:variance3}
\end{align}

According to Martingale Central Limit Theorem, $\lim_{T \to \infty} \frac{X(T)}{\sqrt{T}} \sim \mathcal{N}(0,\sum_{i=1}^N \sqrt{\frac{\sigma_i^2}{p_i^2}})$

Then we can get:
$\lim_{T \to \infty} \frac{D(T)}{\sqrt{T}} \sim \mathcal{N}(0,1).$

According to Eq. ~\eqref{eq:sqrt_T}:
\begin{align}
&\mathbb{E}\left[\left(\lim_{T \to \infty} \frac{\sum_{t=1}^T Z_i(t) - T \mu_i}{\sqrt{T}}\right)^2\right] \nonumber
= \mathbb{E}\left[\left(\lim_{T \to \infty} \frac{\sqrt{\sigma_i^2} d_i(T)}{\sqrt{T}}\right)^2\right] \\
&= \sigma_i^2 \mathbb{E}\left[\left(\lim_{T \to \infty} \frac{D(T)}{\sqrt{T}}\right)^2\right] = \sigma_i^2.
\end{align}
\end{proof}
These results trivially imply that the corresponding \( \{(m_i, h_i) \mid 1 \leq i \leq N\} \) in Eq.~\eqref{eq:mean1} and Eq.~\eqref{eq:aoi} are achievable. 

The derivations in Sections~\ref{section:outerbound} and ~\ref{section:innerbound} are based on the approximation in Eq.~\eqref{eq:aoi_appro}. However, in practice, the approximation might not be exact. Assuming that the margin of error for device $i$ is at most $\delta_i$, we can take this error into account by changing Eq.~\eqref{eq:aoi} in the outer bound to:
\[
\frac{1}{2} \left( \frac{\sigma_i^2}{\mu_i^2} + \frac{1}{\mu_i} \right) + \frac{1}{2}-\delta_i \leq h_i,\forall i.
\]
and the inner bound to: 
\[
\frac{1}{2} \left( \frac{\sigma_i^2}{\mu_i^2} + \frac{1}{\mu_i} \right) + \frac{1}{2}+\delta_i \leq h_i,\forall i.
\]


\section{Simulation Results and Analysis}\label{section:simulation}
In this section, we present the simulation results for the proposed VWD scheduler. We apply our policy to four practical and fundamental problems proposed in Section~\ref{section:model}. In the following simulations, each experiment is conducted over \(1,000,000 \times N\) time slots to ensure a comprehensive evaluation of the scheduling policy. The performance metrics are averaged over 1,000 independent traces to obtain statistically meaningful results. In order to evaluate the proposed policy, we also include numerical solutions obtained from solving Example~\ref{example1}, ~\ref{example2}, ~\ref{example3}, ~\ref{example4}, which serve as benchmarks and are referred to as the Theoretical results. 
\subsection{Minimizing Age of Information with hard throughput requirement constraints}\label{simulation1}
In Example~\ref{example1}, we compare the empirical performance of VWD against the Max-Weight scheduling policy~\cite{modiano}, which, to the best of our knowledge, represents a near-optimal approach in practical scenarios.

\textbf{Max-weight policy}: The Max-Weight policy prioritizes the node with the highest weight metric $W_i(t)$, where:
\begin{align}
&W_i(t) = \frac{\alpha_i p_i}{2} h_i(t) [h_i(t) + 2] + V p_i x_i^+(t)\\
&x_i(t+1) = t q_i - \sum_{\tau=1}^{t} Z_i(\tau)\\
&V=N^2
\end{align}\label{eq:maxweight}


We first consider scheduling \(M\) devices in each time slot. Fig.~\ref{fig:N/M change} illustrates the average AoI as \(M\) increases. We consider different ratios \(N/M \in \{3, 5, 10\}\). For the initial set of devices, the channel reliabilities are set as \(p_i = \frac{i}{N}\) and the throughput requirements as \(q_i = \frac{\lambda p_i}{N}\), with \(\lambda = 0.9\). These settings are maintained consistently as \(M\) is scaled up.
\begin{figure*}[t]
    \centering
    \subfigure[$N/M = 3$ ]{
       \includegraphics[width=0.3\linewidth]{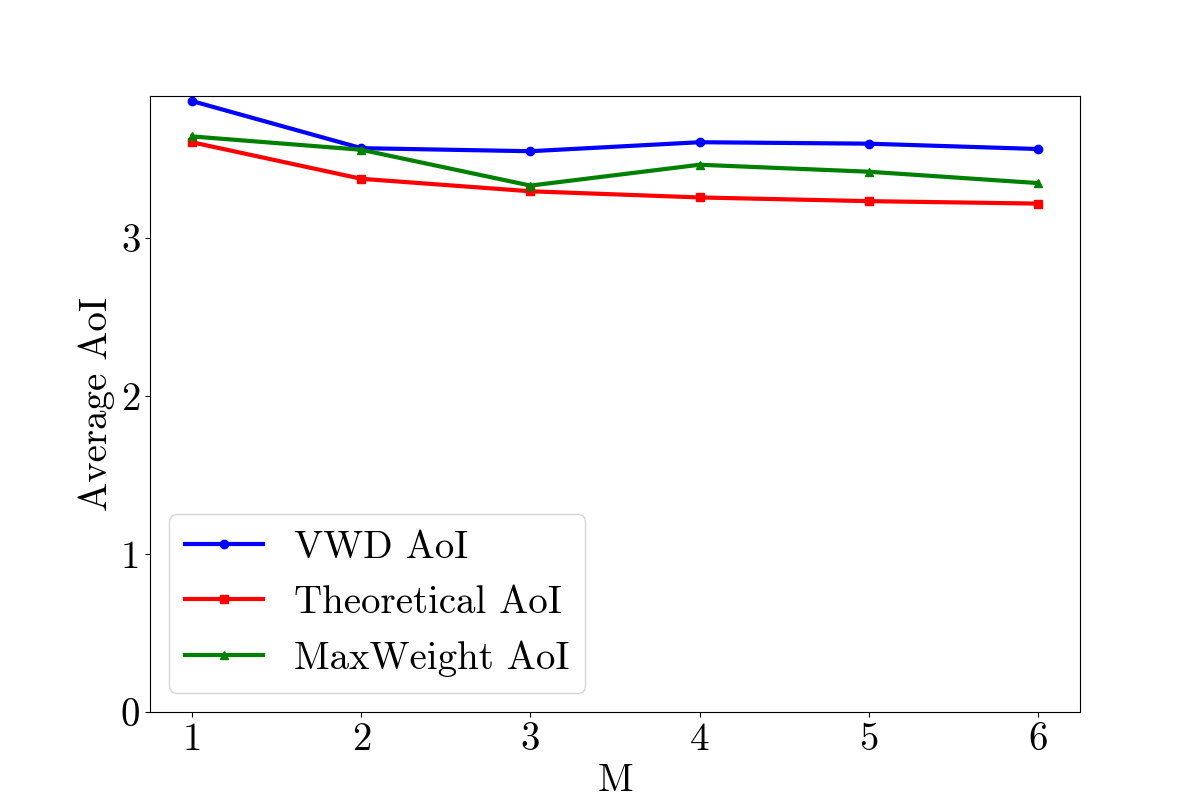}
       \label{fig:N3_1}
    }
    \hfill
    \subfigure[ $N/M = 5$  ]{
       \includegraphics[width=0.3\linewidth]{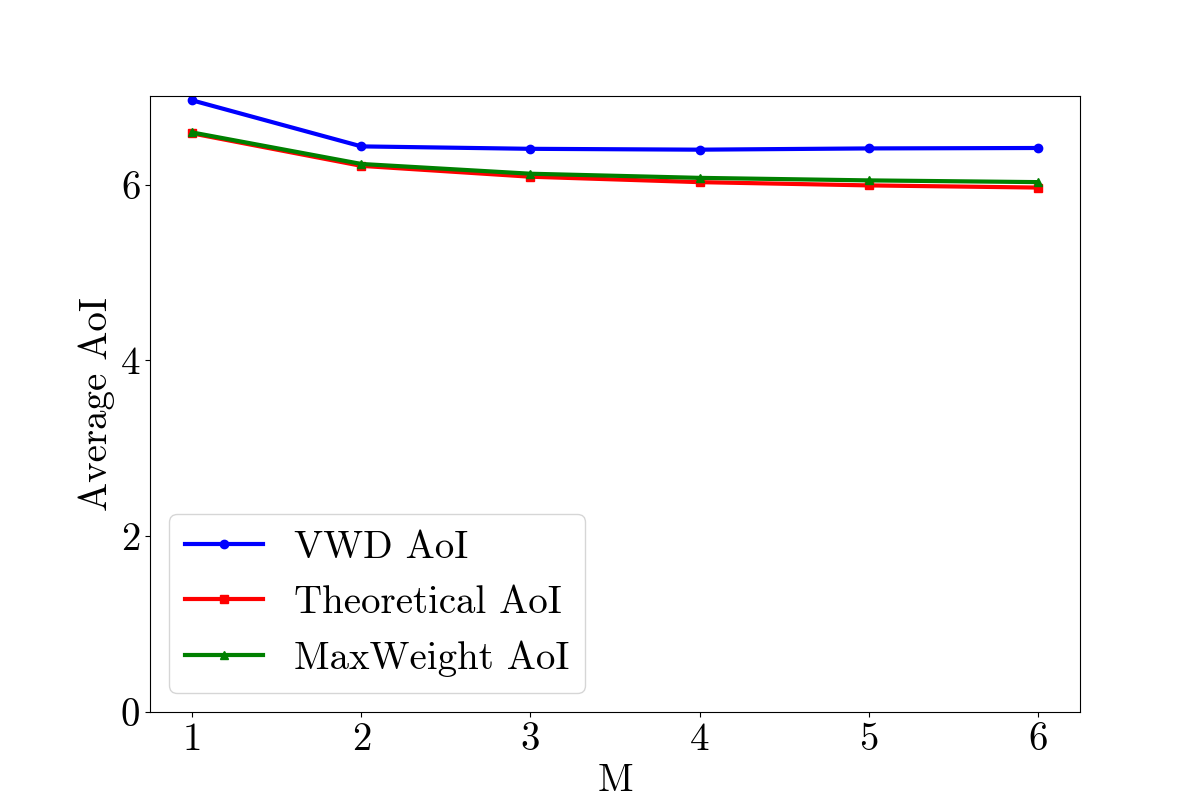}
       \label{fig:N5_1}
    }
    \hfill
    \subfigure[$N/M = 10$  ]{
       \includegraphics[width=0.3\linewidth]{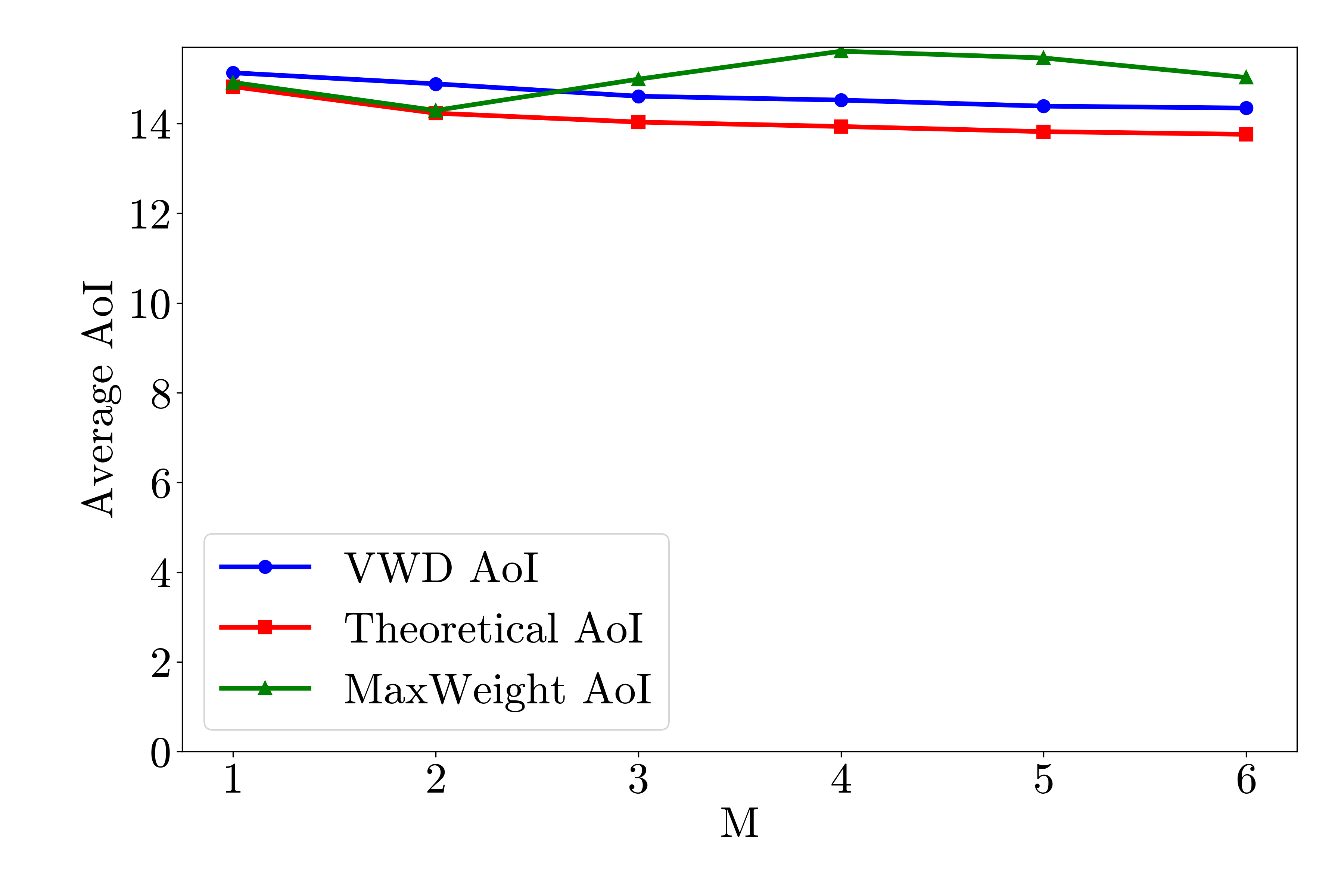}
       \label{fig:N10_1}
    }
    \caption{Averaged Age of Information (AoI) when scale up M.}
    \label{fig:N/M change}
\end{figure*}
Our results indicate that the proposed policy attains performance close to that of the Max-weight Policy, with both closely aligning with our theoretical results. Across most system configurations, the difference between theoretical and VWD performance is small.

We notice that there can be considerable error between theoretical and VWD performance, especially when N/M is small. The approximation in Eq.~\eqref{eq:aoi_appro} is derived by assuming that the inter-delivery time follows an inverse Gaussian distribution, which is a continuous random variable. However, since our system operates in discrete time, the actual inter-delivery time must be an integer. We conjecture that the approximation error shown in Fig.~\ref{fig:N/M change} is mainly caused by the discrepancies between continuous and discrete random variables. To verify this conjecture, 
we conduct simulations under the same settings as in Fig.~\ref{fig:N/M change}, 
comparing the empirical cumulative distribution function (CDF) of inter-delivery times 
with the corresponding fitted inverse Gaussian CDF. 
For each configuration in Fig.~\ref{fig:N/M change}, we focus on the device exhibiting 
the largest approximation error. 
Fig.~\ref{fig:appro error} illustrates the CDF comparisons for different $N/M$ ratios.

\begin{figure*}[t]
    \centering
    \subfigure[$N/M = 3, M=1$ ]{
       \includegraphics[width=0.3\linewidth]{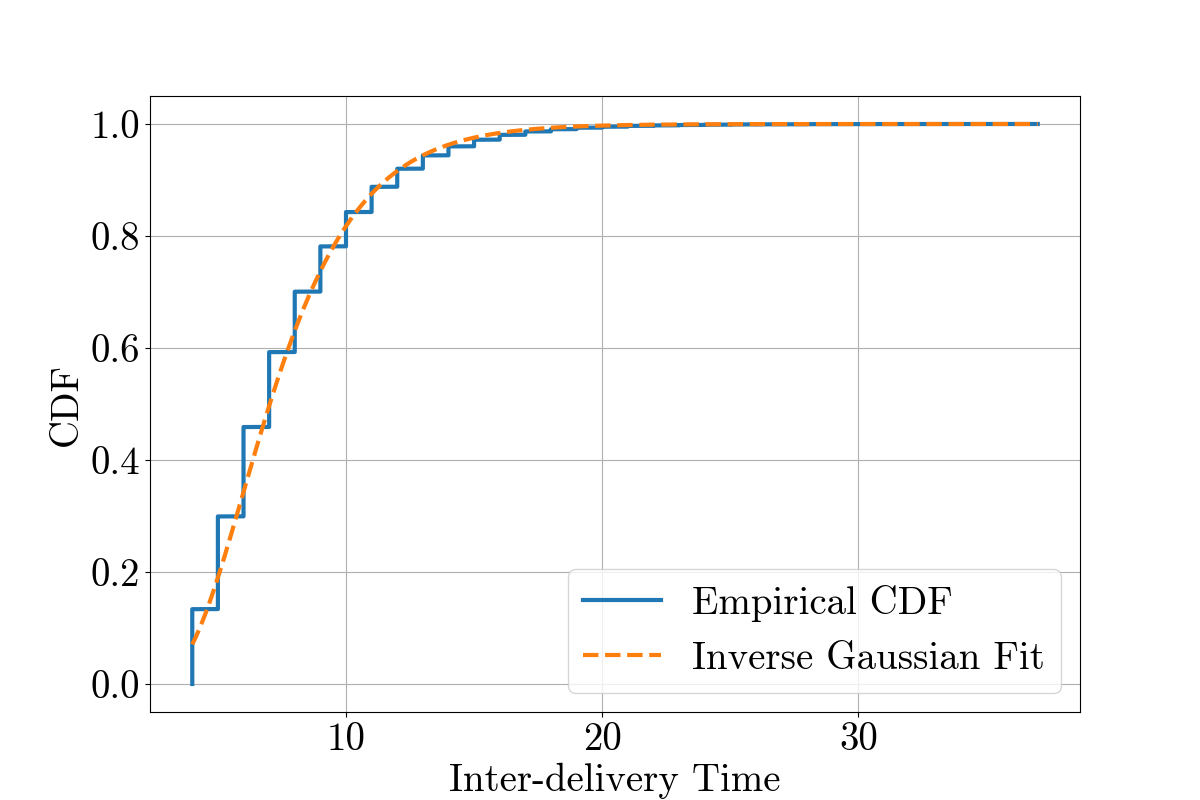}
       \label{fig:N3_2}
    }
    \hfill
    \subfigure[ $N/M = 5, M=1$  ]{
       \includegraphics[width=0.3\linewidth]{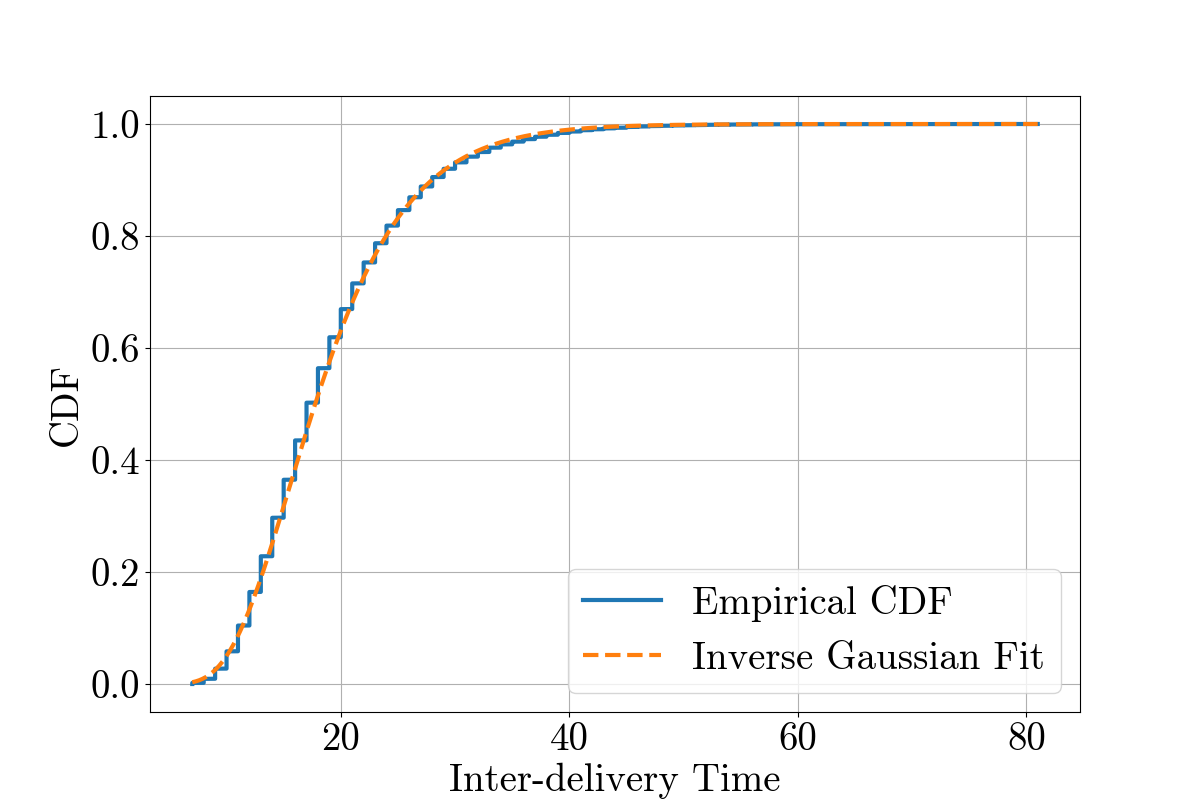}
       \label{fig:N5_2}
    }
    \hfill
    \subfigure[$N/M = 10, M=1$  ]{
       \includegraphics[width=0.3\linewidth]{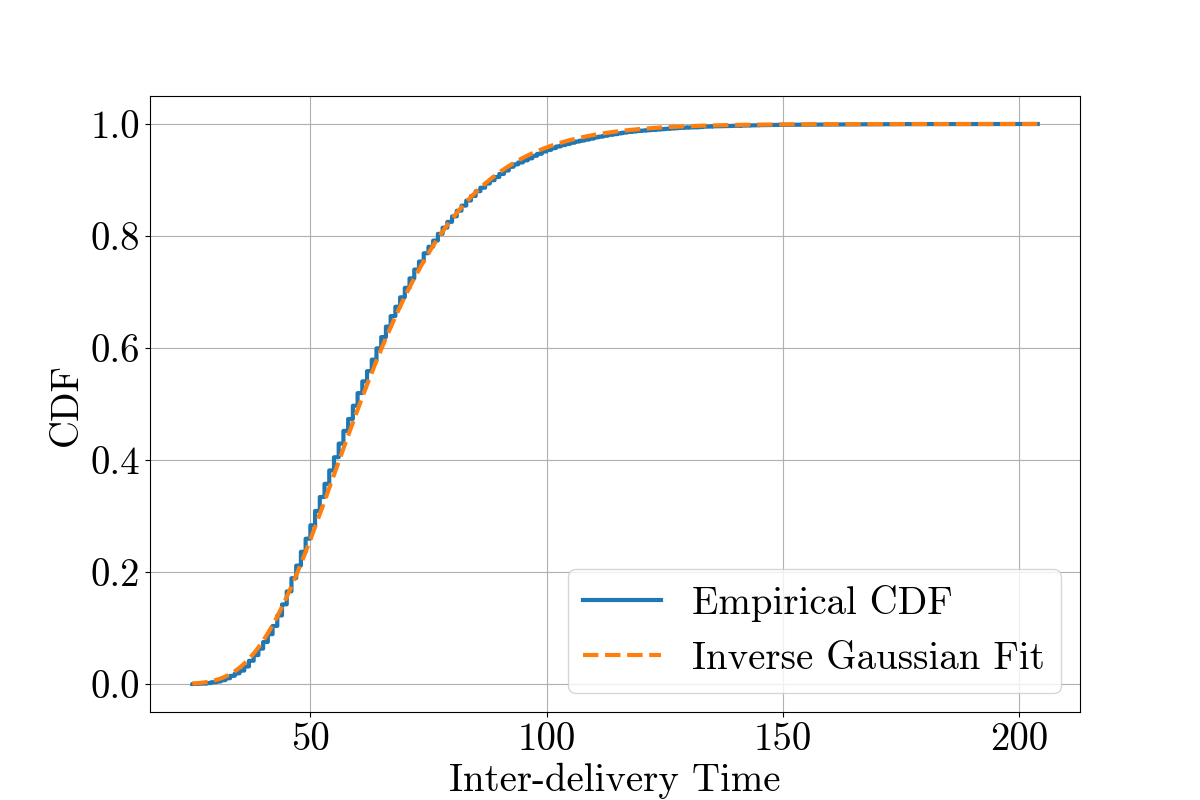}
       \label{fig:N10_2}
    }
    \caption{CDF Comparison between Empirical and Inverse Gaussian as M scale up.}
    \label{fig:appro error}
\end{figure*}
As can be observed in Fig.~\ref{fig:appro error}, the two CDFs are almost identical on all integer points, showing that their differences are indeed mainly due to the continuous-discrete discrepancy. As $N/M$ increases, this discrepancy diminishes, which explains why the difference between theoretical and VWD performance becomes smaller as $N/M$ increases.
\subsection{Cost minimization with soft throughput constraints}
This is the problem described in Example~\ref{example2}, where each device has a throughput requirement of $q_i$. When the throughput requirement is violated, i.e.,$m_i<q_i$, there is a penalty of $C(q_i-m_i)$. We set $C(x)=x^2$ if $x>0$ and $C(x)=0$ if $x\leq 0$. Similar to Section~\ref{simulation1}, we compare the empirical performance of VWD against its theoretical value and the Max-Weight policy. 

We consider two scenarios. In the first scenario, we fix $M=1$. We set $p_i=0.8$ and $q_i=\lambda\gamma_i$, where $\gamma_i = \frac{1.6p_iM}{N}$ if $i \leq N/2$ and $\gamma_i=\frac{0.4p_iM}{N}$ if $i>N/2$. We evaluate three networks with $N=6, 10,$ and $20$. For each network, we evaluate the average cost, defined as the average of the sum of penalty and AoI, across different $\lambda$. Simulation results are shown in Fig.~\ref{fig:f2_1}. Our simulation results show that the empirical performance of VWD is very close to the theoretical value. Moreover, VWD significantly outperforms the Max-Weight policy when $\lambda$ is large. When $\lambda$ is large, it becomes difficult to satisfy the throughput requirements, and policies aiming to satisfy them can result in poor AoI. In contrast, our model allows VWD to suffer a small amount of violation in throughput requirements in exchange for much better AoI. These simulation results demonstrate that achieving a fine-grained trade-off between throughput and AoI can lead to significantly improved system performance.

In the second scenario, we fix the ratio between $N$ and $M$ and evaluate how the average cost changes as $M$ increases. We set $p_i$ and $q_i$ the same as the Section~\ref{simulation1}, and set $\lambda = 0.9$. Simulation results are shown in Fig.~\ref{fig:f2_M}, which shows that VWD is very close to the theoretical value and is much better than the Max-Weight policy.
\begin{figure*}[t]
    \centering
    \subfigure[$N = 6$ devices.]{
       \includegraphics[width=0.3\linewidth]{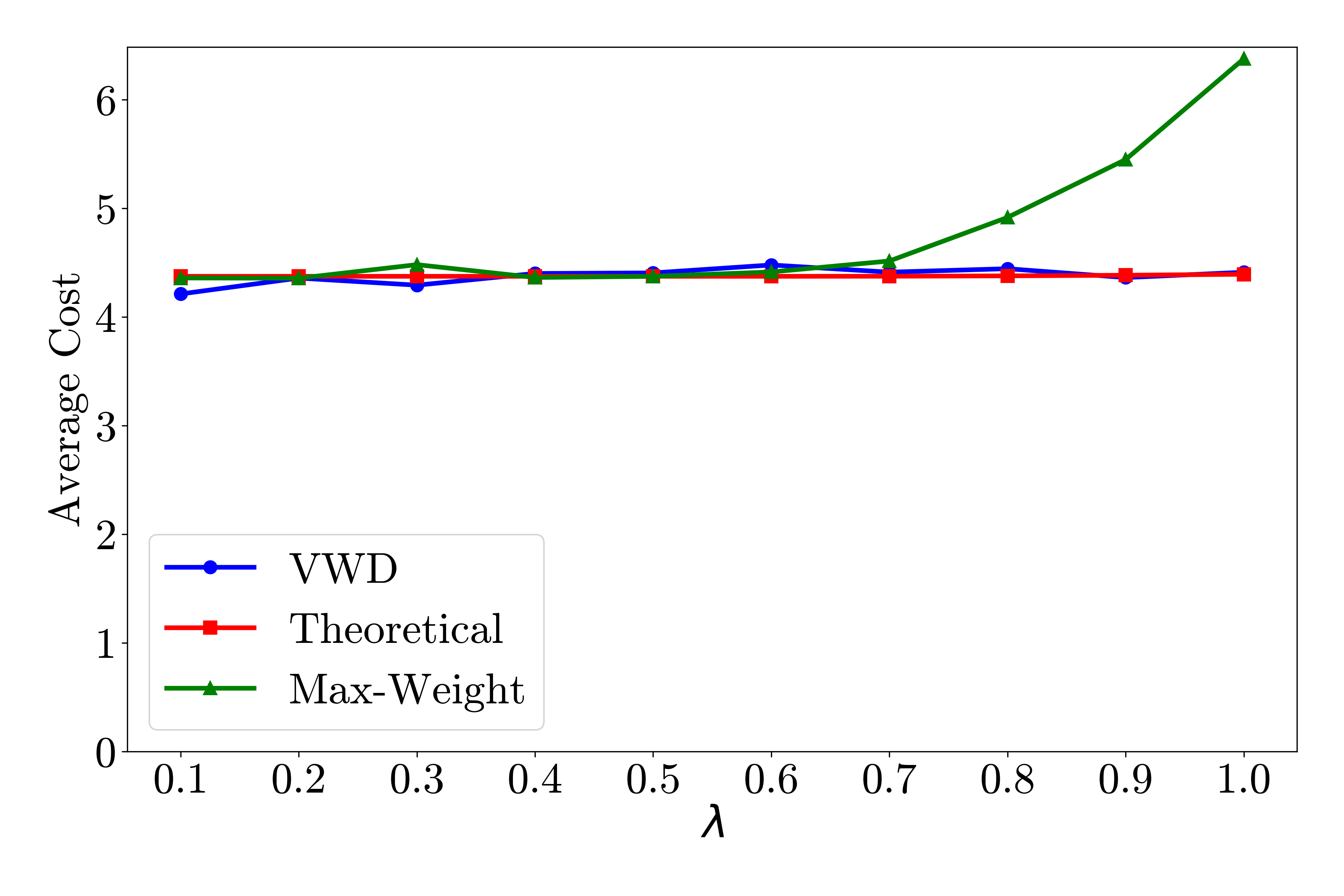}
       \label{fig:N6_2}
    }
    \hfill
    \subfigure[ $N = 10$  devices.]{
       \includegraphics[width=0.3\linewidth]{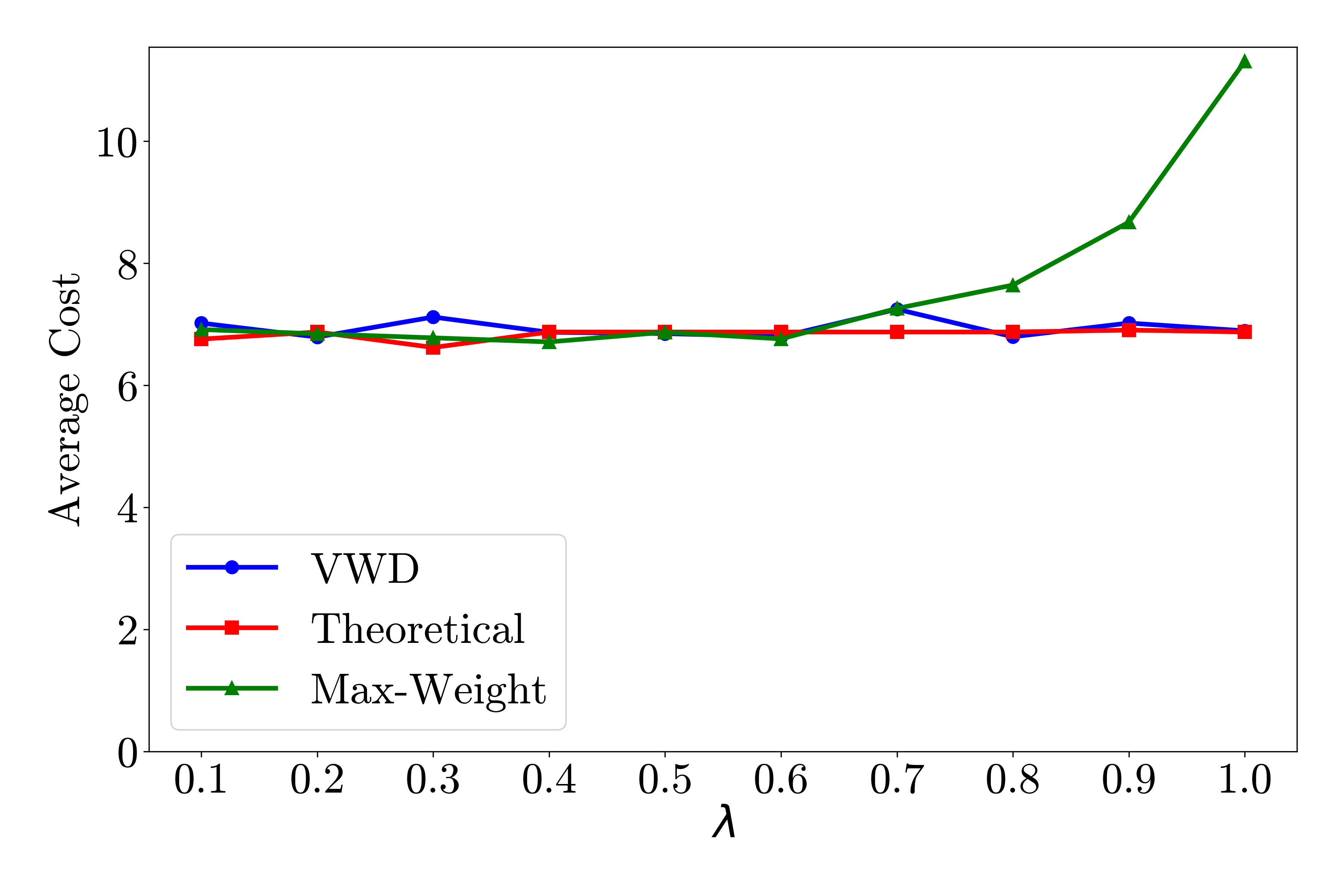}
       \label{fig:N10_2}
    }
    \hfill
    \subfigure[$N = 20$  devices.]{
       \includegraphics[width=0.3\linewidth]{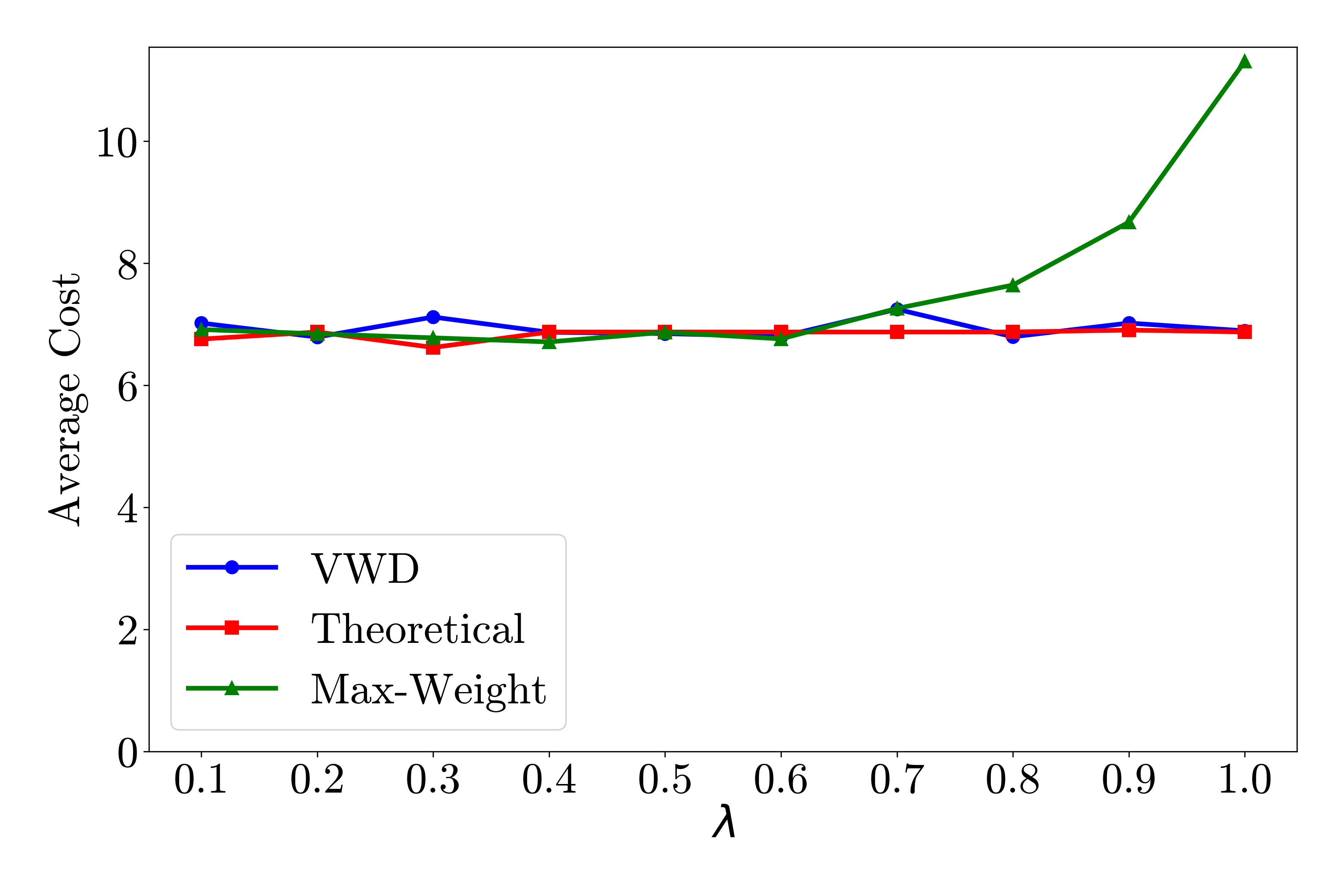}
       \label{fig:N20_2}
    }
    \caption{Average Cost when $\lambda$ changes.}
    \label{fig:f2_1}
\end{figure*}

\begin{figure*}[t]
    \centering
    \subfigure[$N/M= 3$ ]{
       \includegraphics[width=0.3\linewidth]{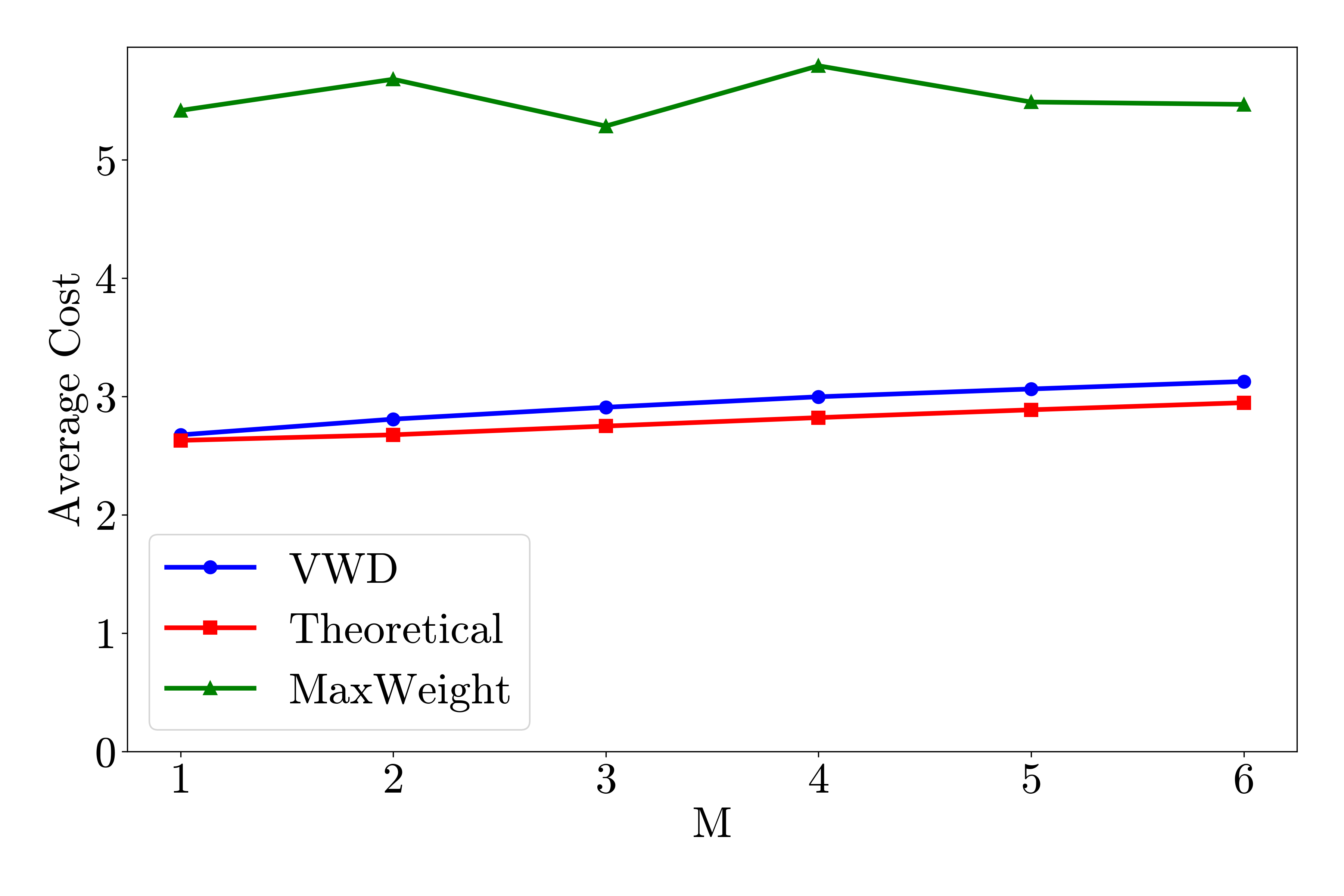}
       \label{f2:M3}
    }
    \hfill
    \subfigure[ $N/M= 5$ ]{
       \includegraphics[width=0.3\linewidth]{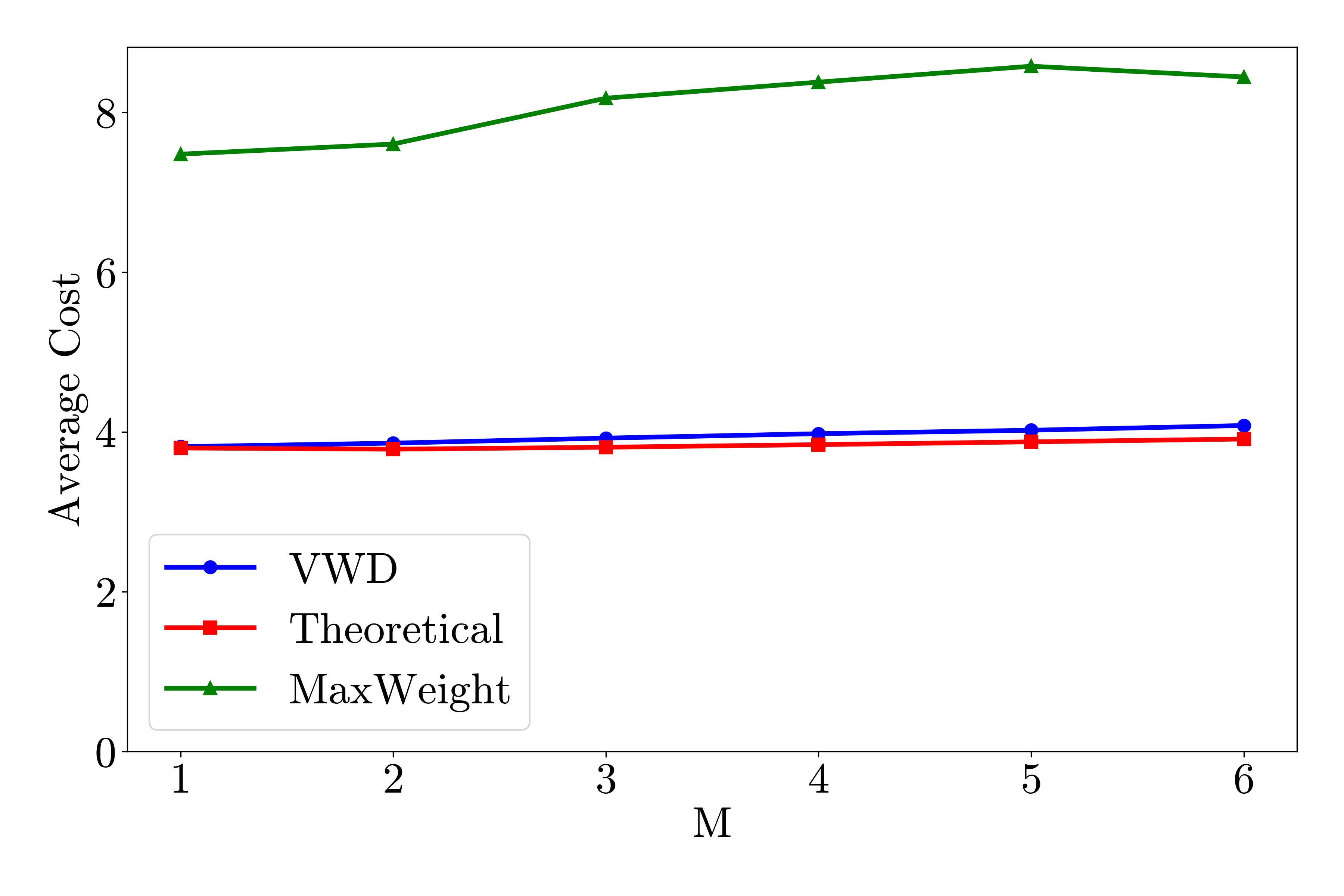}
       \label{f2:M5}
    }
    \hfill
    \subfigure[$N/M= 10$  ]{
       \includegraphics[width=0.3\linewidth]{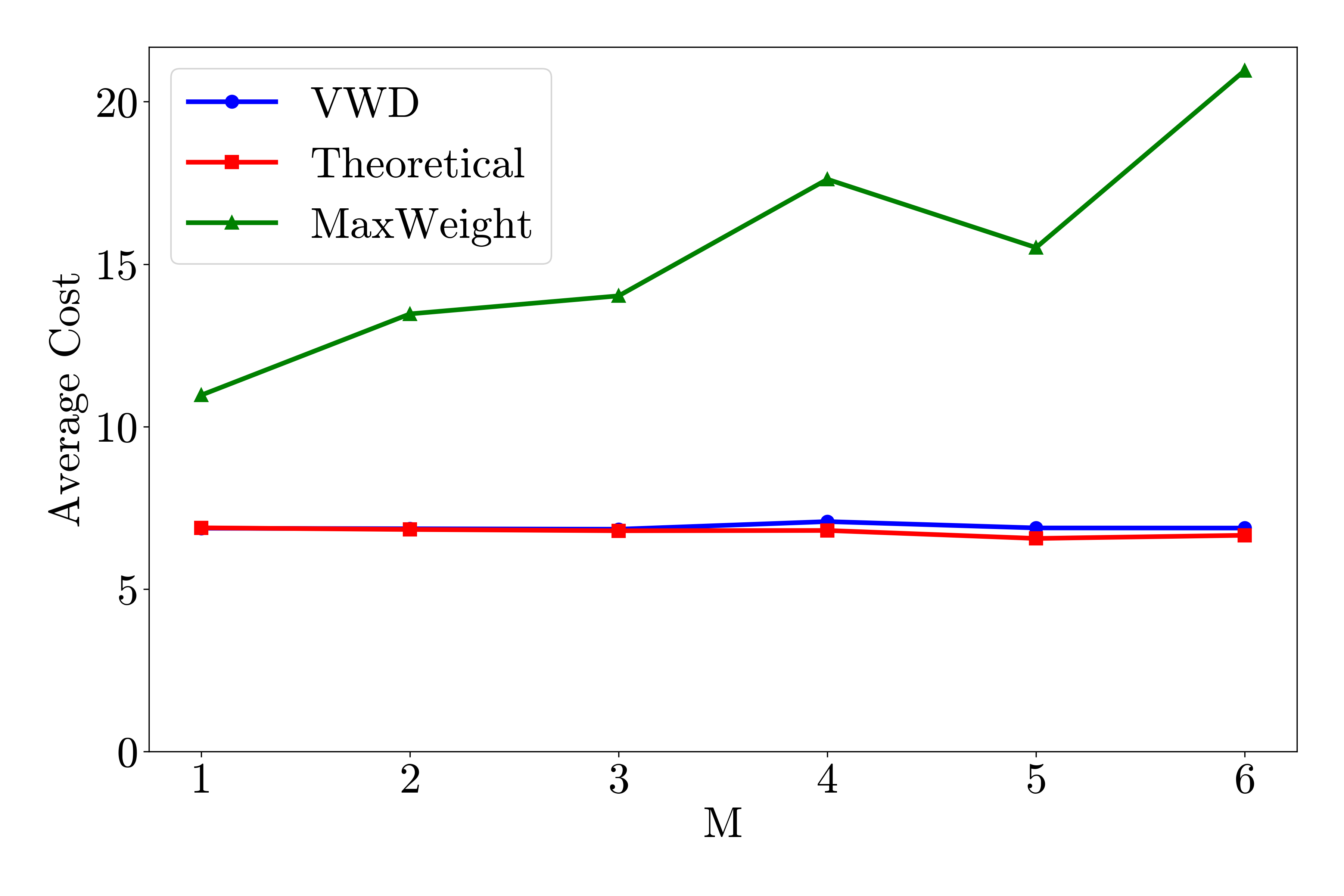}
       \label{f2:M10}
    }
    \caption{Average Cost when N/M changes.}
    \label{fig:f2_M}
\end{figure*}
\subsection{Proportional fairness in both throughput and AoI}
This is the problem described in Example~\ref{example3}, where the objective is to achieve proportional fairness in terms of both throughput and AoI. In particular, the utility function of device $i$ is defined as $\log m_i - \log h_i$, and the BS aims to maximize the total utility in the system. In addition to evaluating the empirical VWD performance and the theoretical value, we also evaluate the Random policy, which schedules devices uniformly at random in each slot.  It is well-known that the Random policy achieves proportional fairness in terms of throughput, that is, it maximizes $\sum_i \log m_i$.

We consider the case when $p_i=\frac{i}{N}$ and evaluate three cases: $N=5, 10,$ and $20$ with varying $M$, which means we select M devices at each time slot. Average Utility is shown in Fig.~\ref{fig:f3}. It can be seen from Fig.~\ref{fig:f3} that VWD is very close to the theoretical value and generally outperforms the Random policy, except when $M = N = 5$ and Eq.~\eqref{eq:eachless1} is not satisfied. In that scenario, no scheduling is required, so all policies yield the same outcome.
The difference between VWD and the Random policy highlights how the problem of proportional fairness in AoI is different from the problem of proportional fairness in throughput.
\begin{figure*}[t]
    \centering
    \subfigure[$N = 5$ devices.]{
       \includegraphics[width=0.3\linewidth]{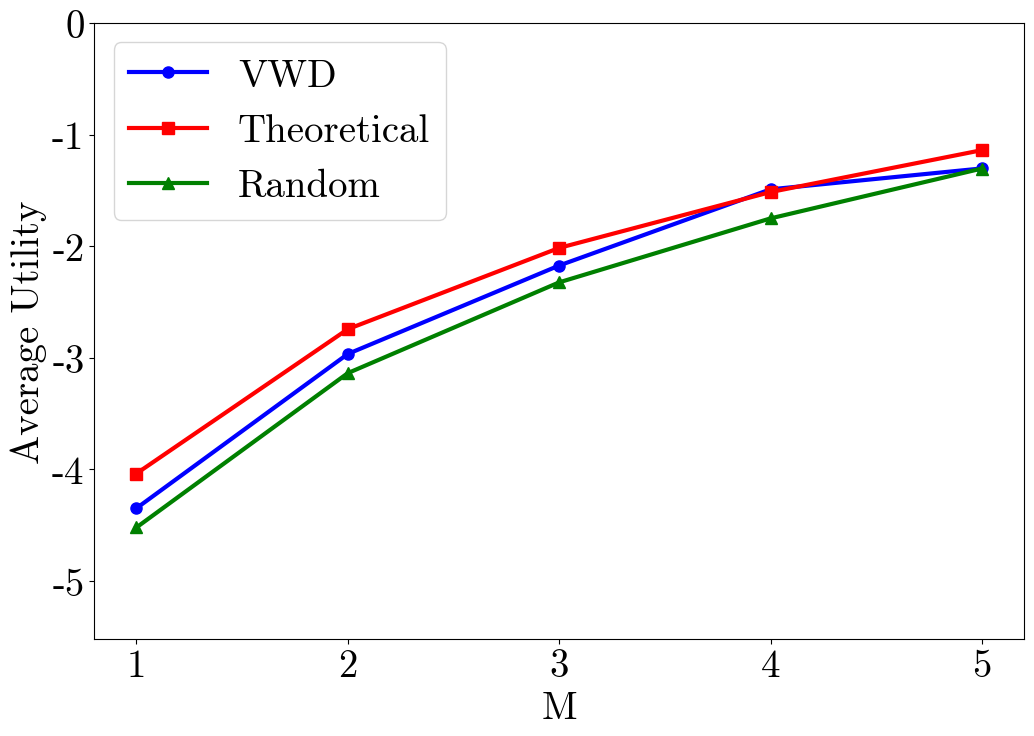}
       \label{fig:N5_3}
    }
    \hfill
    \subfigure[ $N = 10$  devices.]{
       \includegraphics[width=0.3\linewidth]{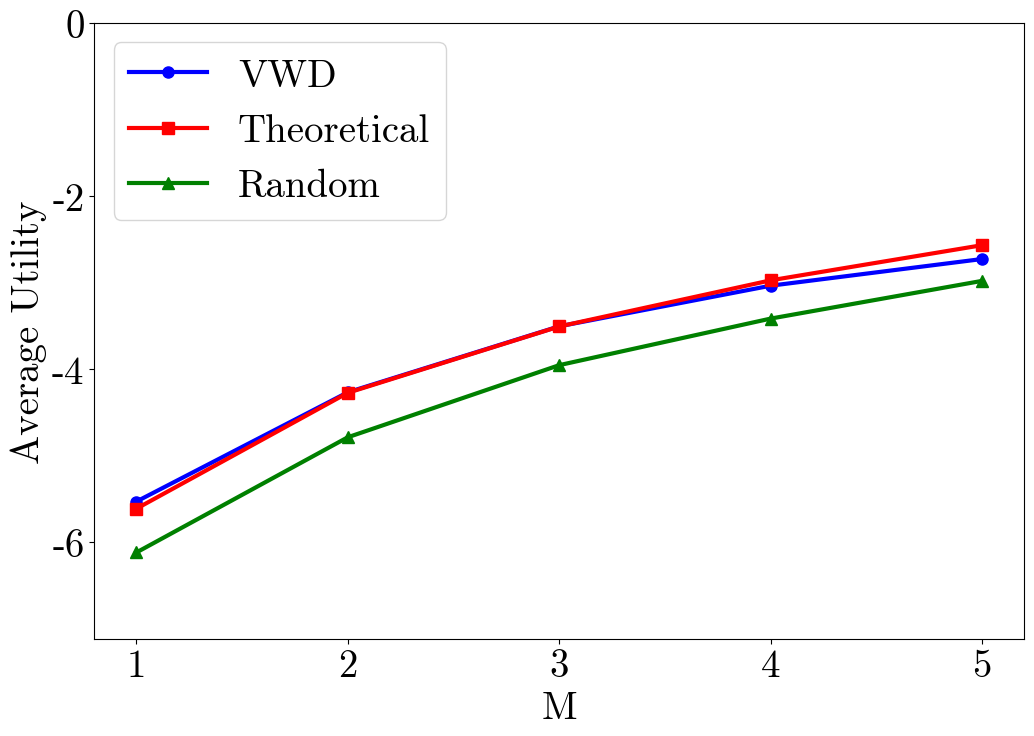}
       \label{fig:N10_3}
    }
    \hfill
    \subfigure[$N = 20$  devices.]{
       \includegraphics[width=0.3\linewidth]{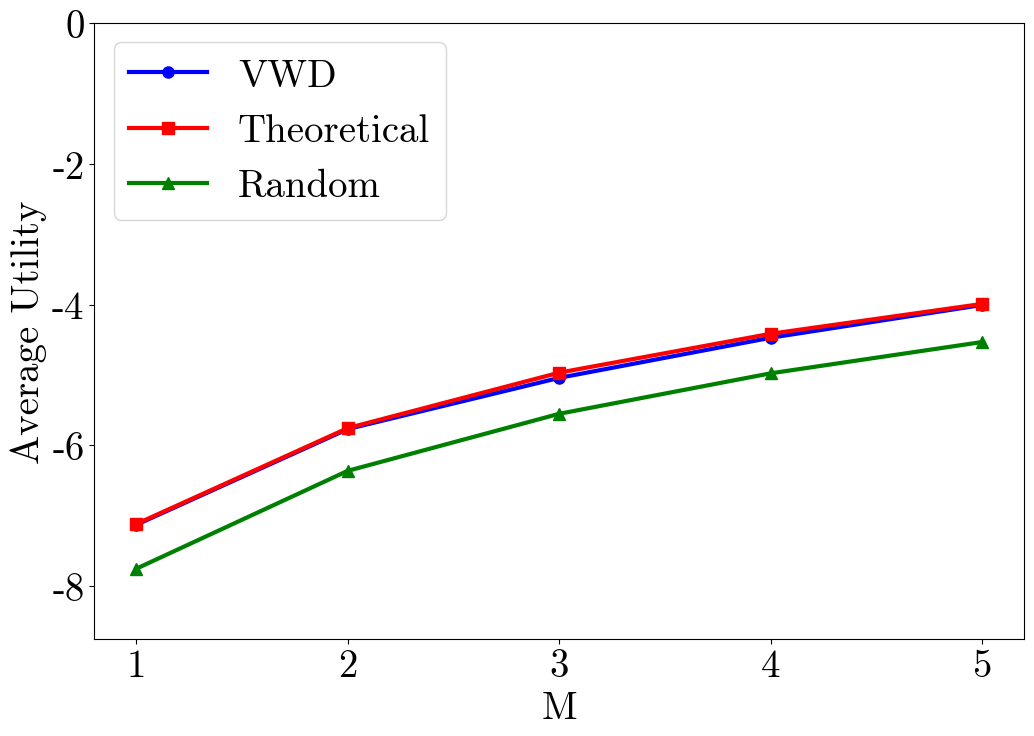}
       \label{fig:N20_3}
    }
    \caption{Average Utility when increase M}
    \label{fig:f3}
\end{figure*}
\subsection{Admission control for devices with AoI constraints}
As described in Example~\ref{example4}, the admission control problem is formulated by determining whether there exist throughput values $m_i$ such that the throughput-AoI pairs $\{(m_i, e_i) \mid 1\leq i\leq N\}$ are feasible. This formulation leverages the throughput-AoI capacity region to transform each device's strict AoI constraint into a feasibility condition on its corresponding throughput.

In the following, we consider a specific instance of the problem. Suppose that the system consists of \(N=10\) devices, where only one device is scheduled per time slot (i.e., \(M=1\)), and each device has a transmission success probability \(p_i=0.8\). For the first five devices, the AoI throughput constraint is given by \(f\) (i.e., \(h_i \leq f\)), while for the remaining five devices, the constraint is \(g\) (i.e., \(h_i \leq g\)). Consequently, the admission control problem is equivalently reformulated as determining the feasibility of the vector pairs $\{(f_i, g_i) \mid 1\leq i\leq N\}.$ The values of \(f_i\) are evenly spaced between 6 and 24, and the theoretical results are compared with the empirical results obtained from VWD.
\begin{figure}[h]
    \vspace{-10pt}
    \begin{center}
        \includegraphics[width=0.95\linewidth]{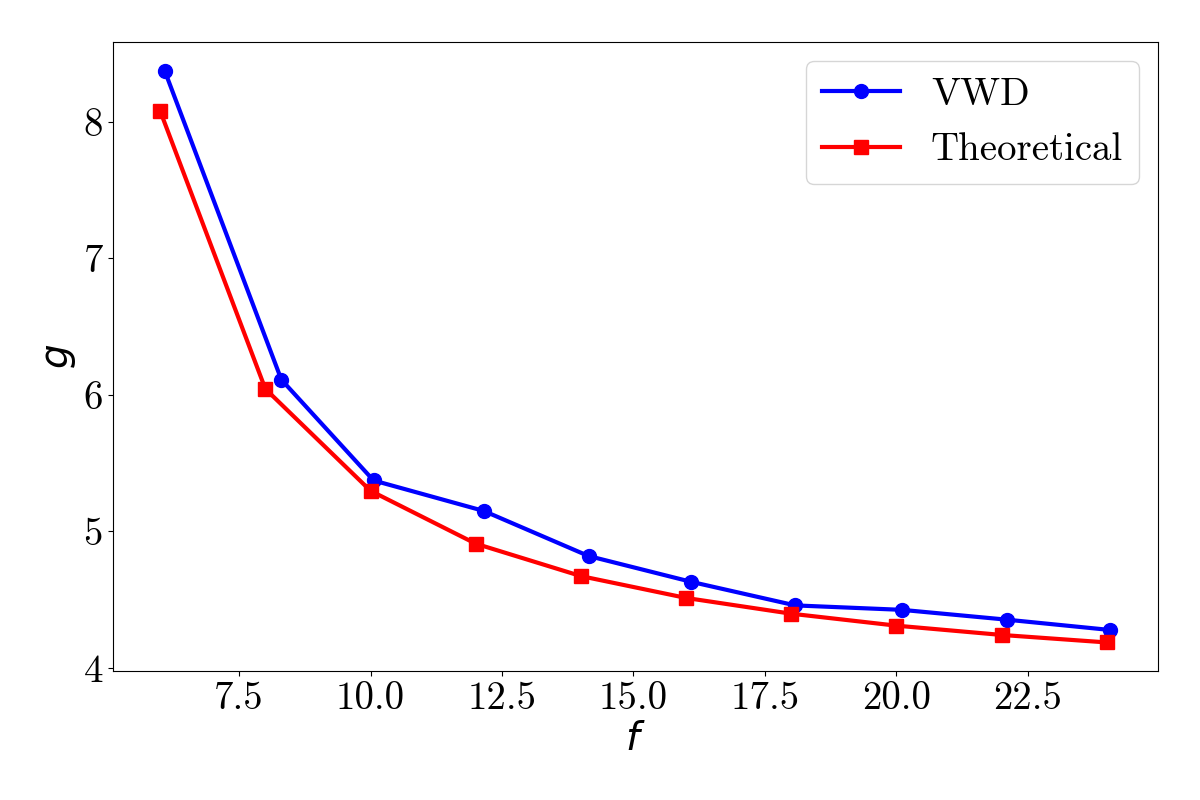}
    \end{center}
    \caption{Hard AoI constraints Capacity Region }
    \label{fig:f4}
\end{figure}

Fig.~\ref{fig:f4} clearly demonstrates that our proposed framework effectively defines the feasible vector pairs $\{(f_i, g_i) \mid 1 \leq i \leq N\}.$
Furthermore, the empirical results closely match the theoretical ones, thereby validating the efficacy of our framework in solving the admission control problem for devices subject to AoI constraints.

\section{Conclusions}\label{section:conclusion}
In this paper, we introduced a theoretical framework that characterizes the fundamental trade-off between throughput and AoI in unreliable wireless networks. Leveraging second-order approximations, we derived a concise expression for the throughput–AoI capacity region, established tight outer and inner bounds, and proposed the Variance-Weighted Deficit scheduling policy, which achieves every interior point of the capacity region. Extensive simulations in multiple application scenarios validate the framework, showing that our method often outperforms conventional approaches. These findings provide insights into the interplay between throughput and information freshness, and offer practical guidelines for designing efficient scheduling policies. Future work may extend the framework to more complex network topologies, additional performance metrics, and dynamic channel conditions.

\section{Acknowledgement}
This material is based upon work supported in part by NSF under Award Number CCF-2332800, the U.S. Army Contracting Command-Aberdeen Proving Ground under Grant Number W911NF-22-1-015, U.S. ACCDC under Grant Number W911NF2520046, and the U.S. ONR under Grant Number N000142412615. Portions of this research were conducted with the advanced computing resources provided by Texas A\&M High Performance Research Computing.

\bibliographystyle{IEEEtran}
\bibliography{reference}
\end{document}